\theoremstyle{definition}
\newtheorem{theorem}{Theorem}[section] 
\newtheorem{remark}{Remark}[section] 
\newtheorem{definition}{Definition}[section]
\newtheorem{proposition}{Proposition}[section]
\newtheorem{example}{Example}[section] 
\newtheorem{lemma}{Lemma}[section]
\numberwithin{equation}{section}
\newcommand{\length}[1]{\left\lvert #1\right\rvert}
\newcommand{\abs}[1]{\bigl\lvert #1\bigr\rvert}
\newcommand{\inv}[1]{#1^{-1}}
\newcommand{\set}[1]{\{#1\}}
\newcommand{\one}{{\mathbbm{1}}}
\newcommand{\real}{\mathbb{R}}
\newcommand{\complex}{\mathbb{C}} 
\newcommand{\sd}{\,|\,} 
\newcommand{\goesto}{\rightarrow} 
\newcommand{\wo}{\backslash} 
\newcommand{\struct}[1]{\mathcal{#1}} 
\DeclareMathOperator{\Fix}{Fix}
\DeclareMathOperator{\SSLS}{SLS}
\title{
  Opinion dynamics and wisdom under 
  out-group
  discrimination 
} 
\author{Steffen Eger\\ Computer Science Department\\ Technische
  Universit\"at Darmstadt}
\date{}
\begin{document}
\maketitle

\begin{abstract}
We study a DeGroot-like opinion dynamics model in which agents may
oppose other agents. As an underlying motivation, in our setup, agents
want to adjust their 
opinions to match those of the agents 
of their `in-group' and, in addition, they want to
adjust their opinions to match the `inverse' of those 
of the agents
of their 
`out-group'. 
Our paradigm can account for 
 persistent disagreement in connected
societies 
as well as 
bi- and multi-polarization.
Outcomes depend  
upon network 
structure and the choice of deviation function
modeling the mode of opposition between agents. 
For a particular choice of deviation function, which we call soft
opposition, we 
derive necessary and sufficient conditions for long-run polarization.
We also consider
social influence (who are the
opinion leaders in the network?) as well as the 
question of wisdom in our 
na\"ive learning paradigm, finding that wisdom is difficult to attain
when there exist sufficiently strong negative relations between
agents.\footnote{Earlier and more verbose working paper versions of
  this article can 
  be found at \url{http://arxiv.org/pdf/1306.3134} and the
  author's personal website.}
\end{abstract}

\section{Introduction}\label{sec:introductionOpposition}
On many 
economic, political, social, and 
religious agendas, 
disagreement among individuals is 
pervasive. 
For example, the following are or have been highly debated:
whether abortion, gay marriage, or death penalty should be
legalized or not;  
whether Iraq had weapons of mass destructions;  
the
scientific standing of evolution; 
whether taxes/social subsidies/unemployment benefits/(lower bounds on)
wages should be 
increased or decreased;   
the right course of government in general; the effectiveness of
alternative (or `standard') medicine such as homeopathy.\footnote{Our
examples are, i.a., taken 
from \textcite{Abramowitz2005}, \citet{Acemoglu2011}, and \textcite{Golub2012}.
} 
In fact, in 
  certain contexts  
such as the political arena, disagreement is `built into' and
essential part of the 
system of opinion exchange \parencite{Jones1995,Cohen2003}. 
Yet, contradicting this factual basis, 
it has been observed that the
phenomenon 
of disagreement 
is not among the predictions of
renown and
widely used theoretical 
models of opinion dynamics in the social and economic context.\footnote{See, e.g., the discussions
\textcite{Acemoglu2011,Acemoglu:2013}. See
also \textcite{Abelson1964}.} 
Namely, in 
these 
models, 
a standard prediction is 
that agents tend toward a
\emph{consensus opinion}, that is, that all agents eventually hold the
same opinion (or belief)\footnote{Typically, 
in the literature, 
the term \emph{belief} is used when 
there exists 
a \emph{truth} for
an agenda, and 
the term \emph{opinion} is used when 
truth is not explicitly modeled. Like our related work, we more
generally subsume under the term opinions also beliefs, judgements,
estimations, or even norms and values, depending on the application
scenario.} 
about any specific 
issue. Typically, this  
applies to both (fully rational) Bayesian frameworks --- which is the reason
why \textcite{Acemoglu2011} call them 
``[no] natural framework[s] for 
understanding persisent disagreement'' (p. 6) --- and non-Bayesian
(boundedly rational) setups such as the famous DeGroot model of opinion
dynamics \parencite{DeGroot1974},  
where consensus obtains 
as long as the social network wherein agents communicate with each
other is strongly 
connected (and aperiodic).

Concerning the non-Bayesian DeGroot model, as we consider in this
work, a few amendments have more recently been suggested which are
capable of producing 
disagreement among agents. 
In one strand of literature, 
models including a \emph{homophily} mechanism, whereby agents limit
their communication to individuals whose opinions are not too
different from their own, can reproduce patterns
of opinion diversity and
disagreement \parencite{Deffuant2000,Hegselmann2002}.
In another strand, Daron Acemoglu and
colleagues \parencite[cf.][]{Acemoglu2011} introduce two 
types of agents, \emph{regular} and \emph{stubborn}, whereby the
latter never update their opinions but `stubbornly' retain their old
beliefs, which may be considered an autarky condition. Multiple
stubborn agents with distinct opinions on a certain agenda may then
draw society toward distinct opinion clusters. 
Such stubborn
agents, it is argued, may appear in the form of opinion leaders,
(propaganda) media, or political parties that wish to influence others
without receiving any feedback from them. 
As a solution to the disagreement problem, however,  
both of these model types rely on a problematic `disconnectedness
condition', insofar as disagreement only obtains when 
there is no 
(uni- or bilateral)
information flow 
between certain subclasses of agents.

In this work, we investigate an alternative explanation of
disagreement, which can also explain disagreement in connected
societies. We consider a non-Bayesian DeGroot-like opinion dynamics 
model in which agents are related to each other via \emph{two types of
  links}. One link type represents 
the \emph{degree} or \emph{intensity} of relationship between agents
and is given by  
nonnegative real numbers. 
The other 
link type represents whether agents \emph{follow} or \emph{oppose} (\emph{deviate from}) each other,
that is, it represents the \emph{kind of
relationship} between agents. We assume that \emph{group
identity} causes agents to follow their \emph{in-group} members and to
deviate from their \emph{out-group} members. In-group favoritism and
out-group discrimination are important and well-established notions
  in social psychology (see, for
  instance, \cite{Tajfel:1971,Brewer1979,Castano2002}). They have
  also more 
recently been included in economists' models (e.g., in an 
experimental context, \cite{Charness2007,Ben-Ner:2009,Chen:2009,Hargreaves:2009,Fehrler2013,Tsutsui:2014}). Experimentally, it has been shown that even minimal group
identities, induced by a random labeling of groups, may lead to
intergroup discrimination. When group membership is more
salient, \textcite{Charness2007} show that there is much more cooperation
between in-group members than between out-group members in a
prisoner's dilemma game, and \textcite{Fehrler2013} show that individuals
associating with particular NGOs (non-governmental organizations)
strongly discriminate against out-group members (those that do not
associate with an NGO) in a trust
game.\footnote{In a `field' setting, the in-group/out-group
distinction may prominently be seen as arising, e.g., in a
(main stream) culture/counterculture 
(e.g., hippies, punks, etc.)
dichotomy  \parencite{Yinger1977} or in classical party divisions
(e.g., Republicans vs.\ Democrats) in
the field of politics, etc.}
Analogously, 
we assume
that agents want to coordinate with their in-group members (have
negative utility from holding different opinions than in-group
members) and want to anti-coordinate with their out-group members
(have negative utility from not deviating from the opinions that their
out-group members hold).\footnote{Out-group discrimination
(opposition) is also closely related 
to what has been 
  termed \emph{rejection} of beliefs, actions, and values of
  dissimilar/disliked  
  others. According to this concept, agents change their normative
  systems to become more dissimilar to interaction partners they
  dislike
  (cf.\ \cite{Abelson1964,Kitts2006,Tsuji2002};
  cf.\ also \cite{Groeber2013}) insofar
  as disliked others may serve as `negative referents' who inspire
  contrary behavior. While in controlled
  experiments \textcite{Takacs2014} do not find strong evidence for
  the tenet that individuals disassociate from the 
  opinions of a disliked source, their study explicitly
  excludes a group identity structure. Moreover, as the authors argue,
  their laboratory experiment may
  have ``suppressed the emotional processes that in field settings
  induce disliking and rejection of others' opinions.''}
  A special case of our model is when an agent opposes everyone but
  himself, i.e., his in-group is himself and his out-group is all `the
  rest'. In some works, such agents have been referred to
  as \emph{rebels} or \emph{anti-conformists} in contrast
  to \emph{conformists} \parencite{Jackson2008,Cao2011,Javarone2014,Jarman2015}. 

Our model closely follows
the literature on learning through communication in a given social
network
(cf.\ \textcite{DeGroot1974,Demarzo2003,Golub2010,Acemoglu2010a,Buechel2015}). 
There, 
the standard assumption is that agents learn from others in a na\"ive
manner, 
not properly accounting for 
the repetition of opinion signals, 
which \textcite{Demarzo2003} call `persuasion bias'. A now classical 
argument is that if all agents' initial beliefs/opinions were
independent and unbiased
estimates of the true value (of a discussion topic), then taking a
weighted average of the 
agents' beliefs in one's social network (where the weights are
proportional to the inverses of the beliefs' variances) is an optimal
aggregation strategy. Then, continuing to average 
--- 
in order to incorporate more remote
information, e.g., from friends of friends --- \emph{in the same manner} is a
boundedly rational 
heuristic that treats the evolving information signals as novel,
not accounting for their cross-contamination. Such a heuristic aggregation of
opinion signals appears quite plausible given the processing costs 
involved in exact inference in this setup (cf.\ \cite{Golub2010}). Also,
recent experimental evaluations find that the na\"ive DeGroot model 
is a much better approximation of information aggregation in network
interactions than `fully rational' Bayesian approaches and that
individuals are indeed 
affected by persuasion bias \parencite{Corazzini2012}. In our model,
we posit that 
agents are subject to the same biases involving processing of
cross-contaminated information and are, \emph{in addition},
susceptible to \emph{in-group bias}, attempting to coordinate with
in-group members and to anti-coordinate with out-group
members.\footnote{If one wanted to construct an argument that closely
follows that of \textcite{Demarzo2003}, one might posit that, in our
model, agents `correct' their out-group members' opinion signals ---
possibly because of distrust --- before averaging.}

This work is structured as follows. Section \ref{sec:model} presents
the model and  
gives two introductory examples. 
Sections \ref{sec:discrete} and \ref{sec:distrust} 
present our main
results, on persistent disagreement (Theorem \ref{cor:w-}) and bi- and
multi-polarization (Proposition \ref{prop:multipolarization}). For a
special case of our model, we derive necessary and sufficient
conditions for long-run polarization (Theorem \ref{theorem:main}) as well as
opinion leadership (Theorem \ref{theorem:social_influence}) as further
main results. 
In Section \ref{sec:conclusions}, we summarize and conclude with a
discussion on wisdom. To make this
work more or less self-contained, we provide concepts, e.g., from graph
 and matrix theory in the
appendix, to which we also relegate all our proofs.


\section{Model}\label{sec:model}
Let $S$ be a finite set (`discrete model') or a subset of the real
numbers (`continuous model'), which we
refer to as 
\emph{opinion spectrum}.\footnote{For the continous DeGroot model as
  we discuss, $S$ is typically modeled as a convex subset of the real
  numbers, that is, $\sum_{j}   
\alpha_jx_j\in S$ for all finite numbers of elements $x_j\in S$ and
all weights $\alpha_j\in 
      [0,1]$ such that $\sum_j \alpha_j = 1$. 
      For convenience, we 
      think of $S$ as the whole of $\real$ or of some (closed and bounded)
      interval $[\alpha,\beta]$ for $\alpha\le\beta$.}  
Let $n\ge 1$ and let $[n]=\set{1,\ldots,n}$ be a set of $n$
agents. Consider the \emph{normal form game}
$([n],S_1\times\cdots\times S_n,U)$, where 
\begin{itemize}
  \item $[n]$ is the set of \emph{players},
  \item $S=S_1=\cdots=S_n$ is the \emph{action set} of each player,
  \item and $U=(u_1,\ldots,u_n)$, where $u_i:S^n\goesto\real$ is the
    \emph{payoff/utility} 
    function of player $i\in[n]$. 
\end{itemize} 
Let $\mathfrak{F}$ be the identity function on $S$ --- that is,
$\mathfrak{F}(x)=x$ for all $x\in S$ --- and
let $\mathfrak{D}$ be a function $\mathfrak{D}:S\rightarrow S$
that is not the identity function, and which
    we term \emph{deviation function} ({in the most general form of}
  our model, 
we let deviation functions $\mathfrak{D}$ depend on particular
agents $i$ and $j$ involved, that is, we subscript $\mathfrak{D}$ as
in $\mathfrak{D}_i$ or $\mathfrak{D}_{ij}$).  
We assume that agents are connected 
via a (social) network $\mathbf{W}$, where $W_{ij}\ge 0$ denotes the
strength of relationship between 
agents $i$ and $j$.\footnote{Throughout, we denote the entries of a vector
$\mathbf{u}$ as $u_i$ or $[\mathbf{u}]_i$ and analogously for
matrices.}
More precisely, $W_{ij}$ signals the degree of importance of agent $j$ for
agent $i$, and we do not require $\mathbf{W}$ to be symmetric, that
is, $W_{ij}$ and $W_{ji}$ may differ.  
Assume that 
$W_{ii}=0$ and $\sum_{j=1}^n W_{ij}=1$, for all
$i\in[n]$.\footnote{The assumption $W_{ii}=0$ can be relaxed, see
\textcite{Groeber2013}. In subsequent sections, we do not always assume that
$W_{ii}=0$.} Assume
further that
player $i$ has payoff for action profile $(b_1,\ldots,b_n)$
\begin{align}\label{eq:utility}
  u_i(b_1,\ldots,b_n) = -\sum_{j\in\text{In}(i)}W_{ij}(b_i-\mathfrak{F}(b_j))^2-\sum_{j\in\text{Out}(i)}W_{ij}(b_i-\mathfrak{D}(b_j))^2
\end{align}
for the continuous model. 
Here, $\text{In}(i)\subseteq[n]$ is the \emph{in-group} (set of \emph{friends})
of
player $i$ and $\text{Out}(i)\subseteq[n]$ is the \emph{out-group} (set of
\emph{enemies}) 
of player $i$. 
For the discrete model, assume that the analogous payoff is 
\begin{align}\label{eq:utility_discrete}
    \begin{split}
  u_i(b_1,\ldots,b_n) &= 
  -\sum_{j: j\in \text{In}(i),b_i\neq \mathfrak{F}(b_j)} W_{ij}-\sum_{j:
    j\in\text{Out}(i), b_i\neq \mathfrak{D}(b_j)} 
  W_{ij}. 
  \end{split}
\end{align}
Utility functions
$u_i$ in 
Eq.\ \eqref{eq:utility} and Eq.\ \eqref{eq:utility_discrete} say that
player $i$ has disutility from 
choosing a different action than his friends and has disutility from
choosing a different action than the `opposite' action of his enemies,
where
$\mathfrak{D}$ specifies what the opposite of an action is. 

When each
agent repeatedly plays a best response to the actions --- which in
our setup are opinions --- of the other
players, i.e., $i\in[n]$ chooses action $b_i$ that maximizes
$u_i(\cdot)$, then, in the continuous model, opinions 
evolve over time according  
to the following weighted average of (possibly, via $\mathfrak{D}$, 
`inverted') past opinions:   
\begin{align}\label{eq:updating}
  b_i(t+1) =
  \sum_{j\in\text{In}(i)}W_{ij}b_j(t)+\sum_{j\in\text{Out}(i)}W_{ij}\mathfrak{D}(b_{j}(t))
  = 
    \sum_{j=1}^n W_{ij}F_{ij}(b_j(t)),
\end{align}
 for $t=0,1,2,\ldots$, starting from some particular initial actions $b_i(0)$. 
Here, $F_{ij}\in\set{\mathfrak{F},\mathfrak{D}}$, depending on whether $j$
  is in $i$'s in-group or out-group, respectively. 
For the discrete model, the analogous best response action is weighted
majority voting of agents' (possibly inverted) past opinions: 
\begin{align}\label{eq:discrete}
    b_i(t+1) = \text{arg}\max_{s\in S}\: \left\{\sum_{j=1}^n W_{ij}\one\bigl(F_{ij}(b_j(t)),s\bigr)\right\},
\end{align}
where 
 $\one(r,t)=1$ if $r=t$ and zero otherwise.
Casting the updating processes \eqref{eq:updating} and \eqref{eq:discrete} in 
  a more compact  `operator' notation, we 
  write ($\mathbf{F}$ being the $n\times n$
  `matrix' with entries 
  $F_{ij}$) 
  \begin{align}\label{eq:matrix}
    \mathbf{b}({t+1}) = (\mathbf{W}\odot\mathbf{F})(\mathbf{b}(t)).
  \end{align}
  Here, we let 
  the `operator' $\mathbf{W}\odot\mathbf{F}$
  act on a vector 
  $\mathbf{b}\in S^n$ in the manner prescribed in \eqref{eq:updating}
  and \eqref{eq:discrete},
  i.e., $\bigl[(\mathbf{W}\odot\mathbf{F})(\mathbf{b})\bigr]_{i}\overset{\text{def}}{=}
  \sum_{j=1}^n W_{ij}\cdot F_{ij}(b_{j})$ and analogously for the
  discrete model.\footnote{For short, we will usually write $(\mathbf{W}\odot\mathbf{F})\mathbf{b}$
    instead of $(\mathbf{W}\odot\mathbf{F})(\mathbf{b})$.}
  Equation \eqref{eq:matrix} may again be rewritten as  
  \begin{align}\label{eq:matrix2}
    \mathbf{b}({t}) = (\mathbf{W}\odot \mathbf{F})^t(\mathbf{b}(0)),
  \end{align}
  by which we denote the $t$-fold application of operator
  $\mathbf{W}\odot{\mathbf{F}}$ on $\mathbf{b}(0)$, that is,
  ${f}^t(\mathbf{b})=f(\cdots f(f(\mathbf{b})))$, where
  $f=\mathbf{W}\odot\mathbf{F}$. In the sequel, we refer to
  $\mathbf{W}\odot\mathbf{F}$ as `operator' or `social
  network'. 
  \begin{remark}
    In case $\mathbf{F}$ is the $n\times n$ matrix of identity
    functions, updating process \eqref{eq:matrix2} collapses to the
    standard DeGroot learning model where
    $(\mathbf{W}\odot\mathbf{F})^t$ is simply the $t$-th matrix power
    of matrix $\mathbf{W}$. 
    From an alternative (equivalent) viewpoint, 
    our model generalizes the
    standard DeGroot model insofar as the latter posits that
    $\text{Out}(i)=\emptyset$ for all $i\in[n]$. 
  \end{remark}

We note that since the operator
$\mathbf{W}\odot\mathbf{F}$ in opinion 
updating process \eqref{eq:matrix} 
retrieves best responses of agents
to an opinion profile  
$\mathbf{b}(t)$, under utility functions
$u_i(\cdot)$ as 
in 
\eqref{eq:utility} or \eqref{eq:utility_discrete}, 
the fixed-points of
$\mathbf{W}\odot\mathbf{F}$ --- that 
is, the points $\mathbf{b}$ such that
$(\mathbf{W}\odot\mathbf{F})(\mathbf{b})=\mathbf{b}$ --- are 
the Nash equilibria of the normal form
games $([n],S^n,U(\cdot))$. Namely, for each such a fixed-point, all
players in $[n]$ play best responses to the other players' actions
(opinions). 

We now illustrate our model with two examples, outlining its
relationship to other models discussed in the literature and hinting
at its potential for 
long-run disagreement. 
 \begin{example}\label{example:anticonformity}
    Let $S=\set{0,1}$ be a binary opinion space. Assume that 
    \begin{align*}
      \mathfrak{D}(x) = \begin{cases}
        1 & \text{if } x=0,\\
        0 & \text{if } x=1.
        \end{cases}
    \end{align*}
    Let $W_{ij}=\frac{1}{\length{N(i)}}$, where $N(i)$ denotes the set
      of \emph{neighbors} of agent $i$ in the social networks, i.e.,
      the set of agents $j$ for which $W_{ij}>0$. When 
      $F_{ij}=\mathfrak{F}$ for all $j\in[n]$, and agent $i$ updates
      opinions according to \eqref{eq:discrete}, then, at each time step,
      agent $i$ chooses the majority opinion among his neighbors'
      opinions. Such 
      individuals have also been called `conformists' in some
      contexts; e.g.,
      \textcite{Jackson2008,Cao2011,Javarone2014,Jarman2015}. Conversely,
      when 
      $F_{ij}=\mathfrak{D}$ for all 
      $j\in[n]$, and agent $i$ updates 
      opinions according to \eqref{eq:discrete}, then, at each time step,
      agent $i$ chooses the \emph{minority} opinion ($=$ majority of
      inverted opinions) among his
      neighbors' opinions. Such individuals have also been called 
      `anti-conformists' or `rebels'. When weights are
      non-uniform and/or agents follow some of their peers while
      deviating 
      from others, then the current setup may yield interesting
      generalizations of the basic conformist/non-conformist model.\footnote{
      It is also worthy to note that binary opinion spaces (in a
      conformist/anti-conformist setup) are closely related to 
      \emph{games on networks} (cf.\ \cite{Jackson2008} and
      references therein) 
      with binary action spaces. In a society with only
      anti-conformists on simple graphs (undirected graphs with no
      self-loops), \emph{maximally independent sets} $S\subseteq[n]$
      of agents for 
      which it holds that $W_{i,S}=\sum_{j\in S}W_{ij}>1/2$ for all
      $i\in[n]\wo S$ form 
      pure strategy Nash equilibria of the underlying games 
      in the sense that assigning one action/opinion to all
      agents in $S$ and the complementary action/opinion to all agents
      in $[n]\wo S$ constitutes a setting where each agent plays a
      best response to the actions/opinions of the others. In a
      society with both conformists and anti-conformists, (pure strategy) Nash
      equilibria exist on networks in which each agent assigns weight
      mass $>1/2$ to conformists: an equilibrium is where all the
      conformists take one action (hold one opinion), and all the
      anti-conformists the other (cf.\ \cite{Jackson2008}, p.272).}
      
  \end{example}
    \begin{example}\label{example:complexSociety}
    In this example, 
     we let $S=[-1,1]$ and think of the opinion $x=-1$ as 
    extreme left-wing opinion, $x=+1$ as extreme right-wing opinion
    and of opinions $-1<x<1$ as more moderate opinions ($x=0$ as
    `center' opinion). Assume there are six individuals,
    organized in four
    groups ${{A}}$, ${{B}}$, ${{C}}$,
    ${{D}}$, members of each of which follow members of 
    their own group  
    and deviate from the members of the other groups. 
    Hence, let 
    $[n]=\set{{1},{{2}},{{3}},{{4}},{{5}},{{6}}}$, $A=\set{1,2}, B=\set{3}, C=\set{4,5}, D=\set{6}$, 
      and
    \begin{align*}
      &\text{Out}(1)=\text{Out}(2)=\set{3,4,5,6},\:
      {\text{Out}(3)}=\set{1,2,4,5,6},\\
      & 
      {\text{Out}(4)}=\text{Out}(5)=\set{1,2,3,6},\:{\text{Out}(6)}=\set{1,2,3,4,5}.
    \end{align*}
    Assume the following specification of deviation functions for
    members in 
    each
    group of agents: 
    \begin{equation}
      \label{eq:devExample}
      \begin{split}
      &\mathfrak{D}_{A}(x)=1,\quad 
      \mathfrak{D}_{B}(x)=-1,\quad
      \mathfrak{D}_{C}(x)=\frac{x}{2},\\
      &{\mathfrak{D}_{D,AB}(x)}=-x,\:\:\:\:
      \mathfrak{D}_{D,C}(x)=\text{sgn}(x)\sqrt{\length{x}}, 
      \end{split}
    \end{equation}
    for all $x\in[-1,1]$. Put differently, 
    agents in group ${A}$ ignore the actual opinion signals of members
    of their out-groups, simply interpreting any uttered opinion of an
    out-group individual as evidence of the opinion $1$.
    Similarly, 
    agents in group ${B}$ interpret any opinion signal
    uttered by an out-group agent as evidence for opinion $-1$.
    Agents in group ${C}$ are moderate
    in that 
    they `discount'
    (extreme) viewpoints that their out-group members hold.
    Finally, agents in group ${D}$ 
    more literally {invert} the opinions of members of (their out-)groups
    $A$ and $B$ --- possibly knowing of these agents' predispositions for
    extreme opinions of particular kinds.
    Moreover, they map the opinions of members of (their out-)group
    $C$ 
    to \emph{more extreme opinions} for any value of
    $x$ between 
    $-1$ and $1$
    --- possibly 
    knowing of these agents' biases toward moderate 
    opinions. 
According to this specification, matrix $\mathbf{F}$ looks as follows: 
   \begin{align*}
      \mathbf{F}=\begin{pmatrix}
        \mathfrak{F} & \mathfrak{F} &  \mathfrak{D}_{A} & \mathfrak{D}_{A}
        &  \mathfrak{D}_{A} & \mathfrak{D}_{A}\\
        \mathfrak{F} & \mathfrak{F} &  \mathfrak{D}_{A} &
        \mathfrak{D}_{A} 
        &  \mathfrak{D}_{A} & \mathfrak{D}_{A}\\
        \mathfrak{D}_{B} & \mathfrak{D}_{B} &  \mathfrak{F} & \mathfrak{D}_{B}
        &  \mathfrak{D}_{B} & \mathfrak{D}_{B}\\
        \mathfrak{D}_{C} & \mathfrak{D}_{C} &  \mathfrak{D}_{C} & \mathfrak{F}
        &  \mathfrak{F} & \mathfrak{D}_{C}\\
        \mathfrak{D}_{C} & \mathfrak{D}_{C} &  \mathfrak{D}_{C} & \mathfrak{F}
        &  \mathfrak{F} & \mathfrak{D}_{C}\\
        \mathfrak{D}_{D,AB} & \mathfrak{D}_{D,AB} &  \mathfrak{D}_{D,AB} & \mathfrak{D}_{D,C}
        &  \mathfrak{D}_{D,C} & \mathfrak{F}\\
      \end{pmatrix}.
    \end{align*}


      In Figure \ref{fig:generalExample}, we plot sample opinion
      dynamics 
      when agents start with the initial consensus
      $\mathbf{b}(0)=(\mu,\ldots,\mu)^\intercal$, where $\mu=1/4$, and
      for an arbitrarily selected positive row-stochastic matrix
      $\mathbf{W}\in\real^{6\times 6}$. Note 
      how, in this case, agents' opinions polarize into extreme and
      moderate viewpoints and note how the opinion dynamics process
      (apparently) converges and stabilizes as time progresses. In the
      same figure, we also sketch the deviation functions defined in
      Equation \eqref{eq:devExample}. 
      \begin{figure}[!htb]
        \begin{center}
          \input{pics/exampleNew.tex}
        \input{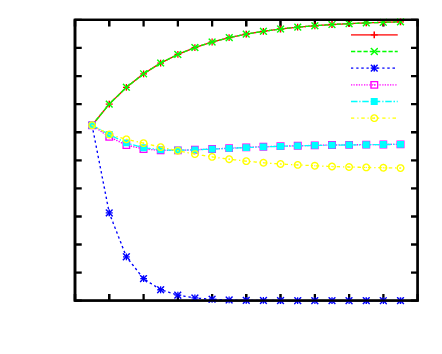}
        \end{center}
        \caption{Left: Deviation functions 
          $\mathfrak{D}_{A}$,
          $\mathfrak{D}_{B}$,
          $\mathfrak{D}_{C}$, and
          $\mathfrak{D}_{D,AB}$,
          $\mathfrak{D}_{D,C}$ specified in the
          Equation \eqref{eq:devExample}, as well as identity function
          $\mathfrak{F}$, on $S=[-1,1]$.
          Right: Opinion dynamics for the society described in the example, starting from the initial consensus 
        $(1/4,\ldots,1/4)$ at time $0$. Agents $1$ and $2$ belong to
          group $A$, agent $3$ makes up group $B$, agents $4$ and $5$
          belong to group $C$, and agent $6$ makes up group
          $D$. Weights $W_{ij}$ set to arbitrary positive values.}
        \label{fig:generalExample}
      \end{figure}
  \end{example}



\section{Persistent disagreement, bi- and multipolarization}\label{sec:discrete}
In this section, we consider our extended DeGroot model in the abstract
situation of arbitrary deviation functions $\mathfrak{D}$. Our results
in this context concern 
the possible consensus opinions that agents may hold in our model in
the long-run (Theorem
\ref{cor:w-}). We will find that 
the long-run
consensus vectors can be determined quite simply: under appropriate
weight conditions, a certain consensus opinion vector $(c,\ldots,c)$ is an equilibrium if
and only if $c$ is a ($\mathfrak{D}$-)\emph{neutral} opinion with
respect to each   
agent's deviation function $\mathfrak{D}$, i.e., $\mathfrak{D}$ maps
the opinion $c$ to itself. 
This is our \emph{persistent
  disagreement} result: 
 as long as there exist (sufficiently strong) out-group relations between
 agents, society will disagree forever if there
 is no opinion which is neutral for each agent, no matter the agents' initial
 opinions. 
After investigating
 long-run persistent disagreement, we consider a particular form of
 disagreement, namely, bi- and multi-polarization (for abstract and
 arbitrary deviation functions $\mathfrak{D}$). 

      To begin with our formal analysis, we define 
      a few concepts.
      Let 
  $\struct{C}$ be the set of \emph{consensus opinion vectors} in $S^n$, i.e.,
  \begin{align*}
        \struct{C}=\set{(a_1,\ldots,a_n)\in S^n\sd
    a_1=\cdots=a_n}. 
    \end{align*}
        Let $Y$ be an arbitrary set and 
        let   $Q$ be an arbitrary function $Q:Y\goesto Y$. 
        By $\text{Fix}(Q)$, we denote the set of \emph{fixed-points} of $Q$,
       that is, the set of all $x\in Y$ such that $Q(x)=x$. For a
      deviation function $\mathfrak{D}$, we call an opinion $x\in S$
      for which 
  $\mathfrak{D}(x)=x$ \emph{($\mathfrak{D}$-)neutral}. 
  Let $A\subseteq[n]$ be an arbitrary subset of the set of agents and let 
  $i\in[n]$ be a particular agent. We denote by $W_{i,A}:=\sum_{j\in
  A}W_{ij}$ the total weight mass $i$ assigns to group $A$. 

      \begin{theorem}\label{cor:w-}
      Let $\mathbf{W}\odot\mathbf{F}$ be an arbitrary social
        network such that
      $F_{ij}\in \set{\mathfrak{F},\mathfrak{D}_i}$, for all
      $i,j\in[n]$. Assume that either $\mathbf{W}\odot\mathbf{F}$ refers
        to the discrete model or that each function $F_{ij}$ in
        $\mathbf{F}$ is continuous. 
        Let $A=\set{i\in[n]\sd W_{i,\text{Out}(i)}>C_0}$
      denote the set of agents whose weight mass assigned to out-group
      members exceeds a particular threshold $C_0$; in the discrete
      case, $C_0=\frac{1}{2}$, and in the continuous case,
      $C_0=0$.
        Then: 
  \begin{align*}
    P_1[\text{Lim}(\mathbf{W}\odot\mathbf{F})\cap\struct{C}]
    = \bigcap_{i\in A}\text{Fix}(\mathfrak{D}_i),
  \end{align*}
   where $\text{Lim}(\mathbf{W}\odot\mathbf{F}) = 
\set{\mathbf{b}\in S^n\sd
  \mathbf{b}=\lim_{t\goesto\infty}(\mathbf{W}\odot\mathbf{F})^t\mathbf{b}(0),\text{
    for some } \mathbf{b}(0)\in S^n}$ denotes the set of
        opinion profiles $\mathbf{b}\in S^n$ that may result in the
        limit of opinion updating
        process \eqref{eq:matrix2}. 
        Moreover, $P_1$ projects consensus vectors
      $(c,\ldots,c)\in S^n$ on their first coordinate $c\in S$. If $A$
      is the empty set, we let $\bigcap_{i\in
      A}\text{Fix}(\mathfrak{D}_i)=S$. 
\end{theorem}
\begin{remark}
One of the implications of Theorem \ref{cor:w-}
is \textbf{persistent disagreement}, under 
the assumptions of the theorem
and further rather mild conditions, as
a prediction of our generalized DeGroot updating
process \eqref{eq:matrix2}. Namely, in particular, the relation $\bigcap_{i\in
        A}\text{Fix}(\mathfrak{D}_i)=\emptyset$ follows when, for
        instance:
\begin{itemize}
    
                \item One agent's deviation function is \emph{radical}:
        $\text{Fix}(\mathfrak{D}_i)=\emptyset$. 
        \item Two agents' 
        assessment of what constitutes
                a neutral opinion differs:
                $\text{Fix}(\mathfrak{D}_{i})\cap\text{Fix}(\mathfrak{D}_j)=\emptyset$. 
        More generally, persistent disagreement follows whenever the
                agents in society have no common interpretation of
                neutrality: there exists no opinion $c\in S$ such that
                $c$ is $\mathfrak{D}_i$-neutral for all agents $i$. 
\end{itemize}
  Moreover, concerning deviation functions $\mathfrak{D}$, our only assumption
  was that there are points which they do not fix. Modeling
  deviation, however, a plausible (stronger) restriction on
  $\mathfrak{D}$ is that 
  $\mathfrak{D}(x)\neq x$ for 
  many, most, or all $x\in S$. Thus, potential long-run agreement,
  that is, $\bigcap_{i\in 
        A}\text{Fix}(\mathfrak{D}_i)\neq\emptyset$, would be
  particularly difficult to obtain, because this consists of a
  condition that is unlikely on the level of individual agents, and in
  addition contains a \emph{cross-agent} constraint that deviation
  functions would have to fix the \emph{same} point(s) $x\in S$ for all
  agents. 
\end{remark}
\begin{example}
        We apply Theorem \ref{cor:w-} to the examples discussed
        previously. In Example \ref{example:anticonformity},
        $\text{Fix}(\mathfrak{D})=\emptyset$, so in a
        conformist/anti-conformist society, consensus cannot ensue if
        at least one agent opposes more than half of his social
        network. 

        In Example \ref{example:complexSociety},
        $\text{Fix}(\mathfrak{D}_{A})=\set{1}$,
        $\text{Fix}(\mathfrak{D}_{B})=\set{-1}$,
        $\Fix(\mathfrak{D}_{C})=\Fix(\mathfrak{D}_{D,AB})=\set{0}$, and $\Fix(\mathfrak{D}_{D,C})=\set{0,1,-1}$. While 
        Theorem \ref{cor:w-} does not directly apply to
        Example \ref{example:complexSociety}, since agent $6$'s deviation functions
        vary across out-group agents, we can nonetheless apply the
        theorem to the society consisting of agents $1$ through $5$
        and conclude that reaching a consensus is not possible for 
        this subsociety since, e.g., $\emptyset
        = \set{1}\cap\set{-1}$ (and assuming that respective weights
        satisfy the positivity assumption outlined in the theorem).
        Hence, since agents $1$ to $5$ cannot reach a consensus, then
        also agents $1$ to $6$ --- the overall society in the example
        --- cannot reach a consensus. 

\end{example}

\subsection*{Polarization}
       We now investigate \emph{(bi-)polarization} as an outcome of our
      opinion updating dynamics. 
      We call an opinion vector $\mathbf{p}\in S^n$
      a \emph{(bi-)polarization} if $\mathbf{p}$ consists of two 
      elements $a,b\in S$ exclusively, that is, if
      $[\mathbf{p}]_i\in\set{a,b}$ for all $i=1,\ldots,n$. Note that
      according to our 
      definition, a consensus vector is a special case of a
      polarization.  
      We first define network structures that are sufficient for
      inducing polarization opinion vectors. 
      \begin{definition}[Opposition bipartite operator
      $\mathbf{W}\odot\mathbf{F}$] 
      We call the operator $\mathbf{W}\odot\mathbf{F}$ 
      \emph{opposition bipartite} if there exists a
      partition $(\struct{N}_1,\struct{N}_2)$ of the set
      of agents $[n]$ into two disjoint subsets ---
      $[n]=\struct{N}_1\cup\struct{N}_2$, with
      $\struct{N}_1\cap\struct{N}_2=\emptyset$
      --- such that agents in
      $\struct{N}_k$ follow each other, for $k=1,2$, while for all
      agents $i\in \struct{N}_k$, $j\in\struct{N}_{-k}$, for
      $k=1,2$, it holds that $i$ deviates from $j$. More
      precisely, we require 
      \begin{align*}
      \forall\: i,i'\in \struct{N}_k&\Bigl(W_{ii'}>0\implies
      i'\in\text{In}(i)
      \Bigr),\quad \text{for } k=1,2,\\
      \forall\:
      i\in\struct{N}_k,j\in\struct{N}_{-k}&\Bigl(W_{ij}>0\implies
      j\in\text{Out}(i)
      \Bigr),\quad \text{for } k=1,2. 
      \end{align*}
      \end{definition}
      \begin{remark}
      Note that our above definition of opposition bipartiteness is
      equivalent to the condition that (1) no two agents within each
      subsociety are enemies of each other and that (2) no two agents
      across the two subsocieties are friends of each other. 

      Also note that we do not necessarily require $\struct{N}_1$ and
      $\struct{N}_2$ to be non-empty. Therefore, the standard 
      DeGroot model (with exclusively friendship relationships) constitutes a
      special case of an opposition bipartite network in which
      $\struct{N}_1=[n]$ and $\struct{N}_2=\emptyset$. 
      \end{remark}
      \begin{remark}
      What we call `opposition bipartite' operator --- or at least a
      special case of our concept --- has also been
      called `balanced signed network' in the
      literature \parencite[cf.][]{Beasley2010}. 
      \end{remark}
      \begin{definition}[Reverse opposition bipartite operator
      $\mathbf{W}\odot\mathbf{F}$] 
      We call the operator $\mathbf{W}\odot\mathbf{F}$ 
      \emph{reverse opposition bipartite} if there exists a
      partition $(\struct{N}_1,\struct{N}_2)$ of the set
      of agents $[n]$ into two disjoint subsets 
      such that agents in
      $\struct{N}_k$ deviate from each other, for $k=1,2$, while for all
      agents $i\in \struct{N}_k$, $j\in\struct{N}_{-k}$, for
      $k=1,2$, it holds that $i$ follows $j$. More precisely, we
      require 
      \begin{align*}
      \forall\: i,i'\in \struct{N}_k&\Bigl(W_{ii'}>0\implies
      i'\in\text{Out}(i)
      \Bigr),\quad \text{for } k=1,2,\\
      \forall\:
      i\in\struct{N}_k,j\in\struct{N}_{-k}&\Bigl(W_{ij}>0\implies
      j\in\text{In}(i)
      \Bigr),\quad \text{for } k=1,2. 
      \end{align*}
      \end{definition}
      An example of
      an opposition bipartite operator is given in
      Example \ref{example:bip} below. An example of a reverse opposition
      bipartite operator is given in Example \ref{example:opp} below. 
      A schematic illustration of both
      concepts is given in Figure \ref{fig:bipOpposition}. 
      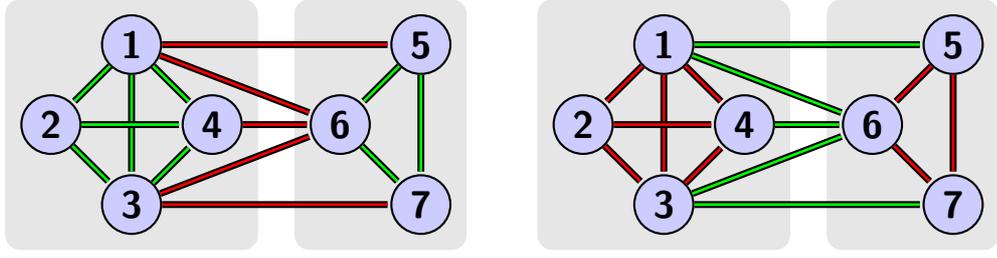
\begin{figure}[!htb]
\centering      
\begin{tikzpicture}[-,>=stealth',shorten >=1pt,auto,node distance=1.5cm,
  thick,main node/.style={circle,fill=blue!20,draw,font=\sffamily\Large\bfseries}]

  \begin{scope}
  \node[main node] (1) {1};
  \node[main node] (2) [below left of=1] {2};
  \node[main node] (3) [below right of=2] {3};
  \node[main node] (4) [below right of=1] {4};

  \node[main node] (5) [right= 3cm of 1] {5};
  \node[main node] (6) [below left of=5] {6};
  \node[main node] (7) [below right of=6] {7};

  \tikzset{Friend/.style   = {
                                 double          = green,
                                 double distance = 1pt}}
  \tikzset{Enemy/.style   = {
                                 double          = red,
                                 double distance = 1pt}}

  \draw[Friend](1) to (2); 
  \draw[Friend](1) to (3);
  \draw[Friend](1) to (4);
  \draw[Friend](2) to (3);
  \draw[Friend](2) to (4);
  \draw[Friend](3) to (4);

  \draw[Friend](5) to (6);
  \draw[Friend](5) to (7);
  \draw[Friend](6) to (7);

  \draw[Enemy](1) to (5);
  \draw[Enemy](1) to (6);
  \draw[Enemy](3) to (6);
  \draw[Enemy](3) to (7);
  \draw[Enemy](4) to (6);
  \begin{pgfonlayer}{background}
    \filldraw [line width=4mm,join=round,black!10]
      (1.north  -| 2.west)  rectangle (3.south  -| 4.east)
      (5.north -| 6.west) rectangle (7.south -| 7.east);
  \end{pgfonlayer}
\end{scope}
  \begin{scope}[xshift=7cm]
  \node[main node] (1) {1};
  \node[main node] (2) [below left of=1] {2};
  \node[main node] (3) [below right of=2] {3};
  \node[main node] (4) [below right of=1] {4};

  \node[main node] (5) [right= 3cm of 1] {5};
  \node[main node] (6) [below left of=5] {6};
  \node[main node] (7) [below right of=6] {7};

  \tikzset{Friend/.style   = {
                                 double          = green,
                                 double distance = 1pt}}
  \tikzset{Enemy/.style   = {
                                 double          = red,
                                 double distance = 1pt}}

  \draw[Enemy](1) to (2); 
  \draw[Enemy](1) to (3);
  \draw[Enemy](1) to (4);
  \draw[Enemy](2) to (3);
  \draw[Enemy](2) to (4);
  \draw[Enemy](3) to (4);

  \draw[Enemy](5) to (6);
  \draw[Enemy](5) to (7);
  \draw[Enemy](6) to (7);

  \draw[Friend](1) to (5);
  \draw[Friend](1) to (6);
  \draw[Friend](3) to (6);
  \draw[Friend](3) to (7);
  \draw[Friend](4) to (6);
  \begin{pgfonlayer}{background}
    \filldraw [line width=4mm,join=round,black!10]
      (1.north  -| 2.west)  rectangle (3.south  -| 4.east)
      (5.north -| 6.west) rectangle (7.south -| 7.east);
  \end{pgfonlayer}
\end{scope}
\end{tikzpicture}
\caption[Schematic illustration of the concepts of opposition
      bipartite and reverse opposition bipartite operators]{Schematic
      illustration of the concepts of opposition 
      bipartite (left) and reverse opposition bipartite operators
      (right). We omit network 
      links referring to weights $\mathbf{W}$ for clarity and we also
      draw links as undirected for the same reason. We omit links
      $F_{ij}$ where $W_{ij}=0$. Red links denote opposition
      ($\mathfrak{D}$), green  
      links following ($\mathfrak{F}$).}
\label{fig:bipOpposition}
\end{figure}

      Clearly, opposition bipartite networks have polarization
      opinion vectors as 
      equilibria, when agents in the same group hold one opinion and
      agents in the alternative group hold the `opposite' opinion, as
      we illustrate below. 
      \begin{example}\label{example:bip}
      Let $\mathbf{W}$ be arbitrary row-stochastic. 
      Consider
      \begin{align*}
      \mathbf{F} = \begin{pmatrix}
      \mathfrak{F} & \mathfrak{F} & \mathfrak{D} & \mathfrak{D} \\
      \mathfrak{F} & \mathfrak{F} & \mathfrak{D} & \mathfrak{D} \\
      \mathfrak{D} & \mathfrak{D} & \mathfrak{F} & \mathfrak{F} \\
      \mathfrak{D} & \mathfrak{D} & \mathfrak{F} & \mathfrak{F}
      \end{pmatrix}.
      \end{align*}
      Clearly, $\mathbf{W}\odot\mathbf{F}$ is opposition bipartite;
      for example, take $\struct{N}_1=\set{1,2}$ and
      $\struct{N}_2=\set{3,4}$. 
      Moreover, let
      $S=\set{\text{``impossible'',``unlikely'',``possible'',``likely'',``certain''}}$
      and let, e.g., ``unlikely'' and ``likely''
      be \emph{$\mathfrak{D}$-opposing viewpoints}, i.e.,
      $\mathfrak{D}(a)=b$ and $\mathfrak{D}(b)=a$ for $a=$``unlikely''
      and $b=$``likely''. Then
      \begin{align*}
      \mathbf{p}=\begin{pmatrix} \text{``unlikely''}\\ \text{``unlikely''}\\ \text{``likely''}\\ \text{``likely''}\end{pmatrix}
      \end{align*}
      is a polarization fixed-point of $\mathbf{W}\odot\mathbf{F}$.  
      \end{example}
      Reverse opposition bipartite networks may induce
      oscillating, or fluctuating, opinion
      updates \parencite[cf.][]{Kramer1971}, very  
      similar to 
      ordinary periodic 
      networks. 
      \begin{example}\label{example:opp}
            Let $\mathbf{W}$ be arbitrary row-stochastic. 
      Consider
      \begin{align*}
      \mathbf{F} = \begin{pmatrix}
      \mathfrak{D} & \mathfrak{D} & \mathfrak{F} & \mathfrak{F} \\
      \mathfrak{D} & \mathfrak{D} & \mathfrak{F} & \mathfrak{F} \\
      \mathfrak{F} & \mathfrak{F} & \mathfrak{D} & \mathfrak{D} \\
      \mathfrak{F} & \mathfrak{F} & \mathfrak{D} & \mathfrak{D}
      \end{pmatrix}.
      \end{align*}
      Clearly, $\mathbf{W}\odot\mathbf{F}$ is reverse opposition bipartite;
      for example, take $\struct{N}_1=\set{1,2}$ and
      $\struct{N}_2=\set{3,4}$. For $\mathbf{p}$ and $\mathfrak{D}$ as in
      Example \ref{example:bip}, we have
      \begin{align*}
      \begin{pmatrix} 
      \text{``unlikely''}\\ 
      \text{``unlikely''}\\ 
      \text{``likely''}\\ 
      \text{``likely''}
      \end{pmatrix} \mapsto_{\mathbf{W}\odot\mathbf{F}}
      \begin{pmatrix} 
      \text{``likely''}\\ 
      \text{``likely''}\\ 
      \text{``unlikely''}\\ 
      \text{``unlikely''}
      \end{pmatrix} \mapsto_{\mathbf{W}\odot\mathbf{F}} 
      \begin{pmatrix} 
      \text{``unlikely''}\\ 
      \text{``unlikely''}\\ 
      \text{``likely''}\\ 
      \text{``likely''}
      \end{pmatrix}  \mapsto_{\mathbf{W}\odot\mathbf{F}}\ldots
      \end{align*}
      Also, note that in this example self-weights $W_{ii}$ may be zero for all
            agents $i$, so that agents do not necessarily have to
            deviate from their own opinions in order for reverse
            opposition bipartiteness to be satisfied.\footnote{In
            fact, in reverse opposition bipartite networks we have
            either 
            $W_{ii}=0$ or we have $W_{ii}>0$ and $F_{ii}=\mathfrak{D}$, while
            in opposition bipartite networks we have either $W_{ii}=0$
            or we have $W_{ii}>0$ and $F_{ii}=\mathfrak{F}$.}
      \end{example}
      Next, we turn from bi-polarization to \emph{multi-polarization}
      in which an opinion vector $\mathbf{p}\in S^n$ consists of $K$
      distinct opinions $s_1,\ldots,s_K$. We generalize our notion of
      opposition bipartite networks.
      \begin{definition}[Opposition multi-partite operator
      $\mathbf{W}\odot\mathbf{F}$] 
      We call the operator $\mathbf{W}\odot\mathbf{F}$ 
      \emph{opposition $K$-partite} if there exists a
      partition $(\struct{N}_1,\ldots,\struct{N}_K)$, for $K\ge 2$, of the set
      of agents $[n]$ into disjoint subsets ---
      $[n]=\struct{N}_1\cup\cdots\cup\struct{N}_K$, with
      $\struct{N}_k\cap\struct{N}_{\ell}=\emptyset$ for all $k\neq \ell$ 
      --- such that agents in
      $\struct{N}_k$ follow each other, for $k=1,\ldots,K$, while for
      all agents $i\in \struct{N}_k$, 
      $j\in\struct{N}_{\ell}$ with $k\neq \ell$, it holds that $i$
      deviates from $j$. More 
      precisely, we require 
      \begin{align*}
      \forall\: i,i'\in \struct{N}_k&\Bigl(W_{ii'}>0\implies
      i'\in\text{In}(i)
      \Bigr),\quad \text{for } k=1,\ldots,K,\\
      \forall\:
      i\in\struct{N}_k,j\in\struct{N}_{\ell}&\Bigl(W_{ij}>0\implies
      j\in\text{Out}(i)
      \Bigr),\quad \text{for } k\neq \ell. 
      \end{align*}
      We also call an opposition $K$-partite
      operator \emph{opposition multi-partite}.
      \end{definition}             
      Opposition multi-partite networks admit
      multi-polarizations as outcomes, as a simple generalization of
      Example \ref{example:bip}.
        \begin{proposition}\label{prop:multipolarization}
      Let $\mathbf{W}\odot\mathbf{F}$ be an opposition $K$-partite,
      for $K\ge 2$, 
      social network. 
      Moreover, let
      $\mathbf{F}$ be such that
      $F_{ij}\in\set{\mathfrak{F},\mathfrak{D}_1,\ldots,\mathfrak{D}_K}$
      for 
      deviation functions $\mathfrak{D}_1,\ldots,\mathfrak{D}_K$ that
      precisely correspond to the $K$ groups society is made up of
      (that is, agents in group $k$, for $1\le k\le K$, apply
      deviation function $\mathfrak{D}_k$). 
      Let
      $s_1,\ldots,s_K\in S$ be such that 
      \begin{align*}
      \mathfrak{D}_k(s_\ell)=s_k,\quad \forall\:k\neq \ell.
      \end{align*}
      Then, there exists a
      \emph{multi-polarization opinion vector} $\mathbf{p}$ consisting of
      opinions $s_1,\ldots,s_K$, that is,
      $[\mathbf{p}]_i\in\set{s_1,\ldots,s_K}$, such that 
      $(\mathbf{W}\odot\mathbf{F})\mathbf{p}=\mathbf{p}$. 
      \end{proposition}
      \begin{example}\label{example:multiple}

      For a concrete example, let $S=\set{L,M,R}$ and
      consider three different groups with distinct deviation
      functions $\mathfrak{D}_1(x)=L$, 
      $\mathfrak{D}_2(x)=M$, $\mathfrak{D}_3(x)=R$ for all $x\in \set{L,M,R}$.
      Group
      $1$ may be thought of as always deviating to a left wing opinion, 
      provided that its members deviate from certain agents; group $2$ to a
      moderate position in the opinion spectrum; and group $3$ to a
      right wing position. 
      Let, e.g., $n=6$, $\mathbf{W}$ be arbitrary row-stochastic, and let 
      \begin{align*}
      \mathbf{F} = 
      \begin{pmatrix}
      \mathfrak{F} & \mathfrak{F} & \mathfrak{F} & \mathfrak{D}_1 & \mathfrak{D}_1 & \mathfrak{D}_1 \\
      \mathfrak{F} & \mathfrak{F} & \mathfrak{F} & \mathfrak{D}_1 & \mathfrak{D}_1 & \mathfrak{D}_1 \\
      \mathfrak{F} & \mathfrak{F} & \mathfrak{F} & \mathfrak{D}_1 & \mathfrak{D}_1 & \mathfrak{D}_1 \\
      \mathfrak{D}_2 & \mathfrak{D}_2 & \mathfrak{D}_2 & \mathfrak{F} & \mathfrak{D}_2 & \mathfrak{D}_2\\
      \mathfrak{D}_3 & \mathfrak{D}_3 & \mathfrak{D}_3 & \mathfrak{D}_3 & \mathfrak{F} & \mathfrak{F} \\
      \mathfrak{D}_3 & \mathfrak{D}_3 & \mathfrak{D}_3 & \mathfrak{D}_3 & \mathfrak{F} & \mathfrak{F} 
      \end{pmatrix}.
      \end{align*}
      Clearly, $\mathbf{W}\odot\mathbf{F}$ is opposition $3$-partite
      (see also Figure \ref{fig:generalizedPol}). 
      It is easy to check that, e.g.,
      $\mathbf{p}=(L,L,L,M,R,R)$ is a fixed-point of
      $\mathbf{W}\odot\mathbf{F}$, in accordance with the proposition.
      \begin{figure}[!htb]
      \centering
      \begin{tikzpicture}[-,>=stealth',shorten >=1pt,auto,node distance=1.5cm,
  thick,main node/.style={circle,fill=blue!20,draw,font=\sffamily\Large\bfseries}]

  \begin{scope}
  \node[main node] (1) {1};
  \node[main node] (2) [below left of=1] {2};
  \node[main node] (3) [below right of=1] {3};

  \node[main node] (4) [below of=3] {4};

  \node[main node] (5) [right=2.5cm of 1] {5};
  \node[main node] (6) [below of=5] {6};

  \tikzset{Friend/.style   = {
                                 double          = green,
                                 double distance = 1pt}}
  \tikzset{Enemy/.style   = {
                                 double          = red,
                                 double distance = 1pt}}
  \tikzset{Enemy1/.style   = {
                                 double          = orange,
                                 double distance = 1pt}}
  \tikzset{Enemy2/.style   = {
                                 double          = brown,
                                 double distance = 1pt}}

  \draw[Friend](1) to (2); 
  \draw[Friend](1) to (3);
  \draw[Friend](2) to (3);

  \draw[Friend](4) to [out=-40,in=-140,looseness=0.8,loop,distance=1cm] (4); 

  \draw[Friend](5) to (6);

  \draw[Enemy1,->,dashed](1) to (5);
  \draw[Enemy2,->](6) to (3);
  \draw[Enemy2,->](6) to (4);
  \draw[Enemy1,->,dashed](2) to (4);
  \draw[Enemy,->](4) to (3);
  \draw[Enemy,->](4) to[bend left] (2); 
  \draw[Enemy,->] (4) to [bend right] (6);

\end{scope}
\end{tikzpicture}
      \caption[Graphical illustration of
      Example \ref{example:multiple}]{Graphical illustration of
      Example \ref{example:multiple}. Groups have individualized
      deviation functions, in different colors. 
      We omit many links for clarity.}
      \label{fig:generalizedPol}
      \end{figure}
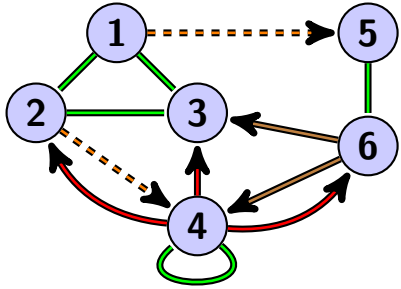
      \end{example}
      We omit the introduction of reverse opposition $K$-partite
      networks, for $K\ge 2$, as a straightforward
      generalization of reverse opposition bipartite networks. The
      generalization is along the
      lines of the generalization of opposition $K$-partite networks
      over opposition bipartite ones.

\section{Continuous linear deviation functions}\label{sec:distrust}
We now consider the situation when deviation functions take the
affine-linear form 
\begin{align}\label{eq:affine_linear}
  \mathfrak{D}(x) = a\cdot x+b,
\end{align}
for $a,b\in \real$. When deviation functions are of this form and when
agents are  
in addition homogeneous with respect to their deviation
functions, i.e., $\mathfrak{D}$ does not vary across agents, then the
operator $\mathbf{W}\odot\mathbf{F}$ admits a particularly simple form,
namely, that of an affine-linear operator; see
Proposition \ref{prop:consensusOpposition} in the appendix. Moreover,
we consider here the  
special case when $a=-1$ and $b=0$.\footnote{The case $|a|>1$ usually
implies that opinions `explode' over time because agents `overscale' their
out-group members' opinions while $|a|<1$ usually implies that society
reaches a `zero consensus' because agents iteratively discount their
out-group members' opinions.} 
We consider such $\mathfrak{D}$ on opinion spectra $S_0$ 
that are either of the form $[-\beta,\beta]$ for some $\beta>0$ or of the form 
$S=\real$. We call this $\mathfrak{D}$ defined on such opinion spectra
\emph{soft opposition}.\footnote{We call such $\mathfrak{D}$ soft
  opposition because opinions are inverted the less strongly the
  closer they are to the neutral consensus zero. In our working paper
  version, we 
also define a (discontinuous) deviation function called \emph{hard
  opposition} that maps opinions to extreme inverted values of the opinion
spectrum, depending on whether they are positive or negative
(depending on which side of the opinion spectrum they lie, with $0$ as
reference point).} 
In this case,
social networks $\mathbf{W}\odot\mathbf{F}$ may be represented by a
matrix $\mathbf{A}$ that has entries $W_{ij}$ iff $j\in\text{In}(i)$
and entries $-W_{ij}$ iff $j\in\text{Out}(i)$. 

\subsection{A graph theoretical description}
In the remainder, we consider the situation when graphs
(i) contain no self-loops and are undirected (simple) in the sense
that $W_{ij}=W_{ji}$ and $F_{ij}=F_{ji}$,\footnote{This captures
  \emph{reciprocity}: amity and enmity are mutual.}  
(ii) contain at most one
type of deviation function $\mathfrak{D}$ across all agents, and (iii)
when
$\mathfrak{D}$ is linear on the opinion spectrum $S_0$; in particular,
$\mathfrak{D}$ is soft opposition. 
When networks $\mathbf{W}\odot\mathbf{F}$ are so specified, then, as
before, $\mathbf{W}\odot\mathbf{F}$ admits a 
linear matrix 
representation 
$\mathbf{A}$ where 
$\mathbf{A}$ 
is in addition symmetric: $\mathbf{A}^\intercal=\mathbf{A}$. 
We denote this
class of networks by $\SSLS(S_0)$ (for 
\emph{S}imple, \emph{L}inear, \emph{S}oft opposition, and where the argument
refers to the opinion spectrum), that is, 
  \begin{align*}
    \SSLS(S_0) = \set{\mathbf{W}\odot\mathbf{F}\sd \forall
      i,j\in[n]\:
      \bigl( &W_{ii}=0,\:{F}_{ij}\in\set{\mathfrak{F},\mathfrak{D}},\:
      \mathfrak{D} \text{
      soft opposition on
      $S_0$},\\
      & W_{ij}=W_{ji}, F_{ij}=F_{ji}\bigr)}  
  \end{align*}
  denotes the class of social networks on agent sets $[n]$ that satisfy
  simplicity, 
  symmetricity, etc., as described.
  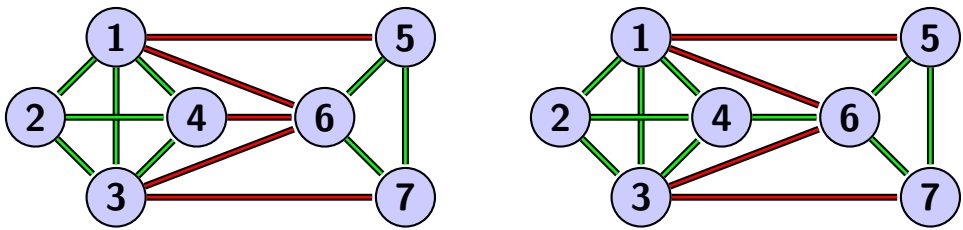
\begin{figure}[!htb]
    \centering
\begin{tikzpicture}[-,>=stealth',shorten >=1pt,auto,node distance=1.5cm,
  thick,main node/.style={circle,fill=blue!20,draw,font=\sffamily\Large\bfseries}]

  \node[main node] (1) {1};
  \node[main node] (2) [below left of=1] {2};
  \node[main node] (3) [below right of=2] {3};
  \node[main node] (4) [below right of=1] {4};

  \node[main node] (5) [right= 3cm of 1] {5};
  \node[main node] (6) [below left of=5] {6};
  \node[main node] (7) [below right of=6] {7};

  \tikzset{Friend/.style   = {
                                 double          = green,
                                 double distance = 1pt}}
  \tikzset{Enemy/.style   = {
                                 double          = red,
                                 double distance = 1pt}}

  \draw[Friend](1) to (2); 
  \draw[Friend](1) to (3);
  \draw[Friend](1) to (4);
  \draw[Friend](2) to (3);
  \draw[Friend](2) to (4);
  \draw[Friend](3) to (4);

  \draw[Friend](5) to (6);
  \draw[Friend](5) to (7);
  \draw[Friend](6) to (7);

  \draw[Enemy](1) to (5);
  \draw[Enemy](1) to (6);
  \draw[Enemy](3) to (6);
  \draw[Enemy](3) to (7);
  \draw[Enemy](4) to (6);
\end{tikzpicture}
\hspace{1cm}
\begin{tikzpicture}[-,>=stealth',shorten >=1pt,auto,node distance=1.5cm,
  thick,main node/.style={circle,fill=blue!20,draw,font=\sffamily\Large\bfseries}]

  \node[main node] (1) {1};
  \node[main node] (2) [below left of=1] {2};
  \node[main node] (3) [below right of=2] {3};
  \node[main node] (4) [below right of=1] {4};

  \node[main node] (5) [right= 3cm of 1] {5};
  \node[main node] (6) [below left of=5] {6};
  \node[main node] (7) [below right of=6] {7};

  \tikzset{Friend/.style   = {
                                 double          = green,
                                 double distance = 1pt}}
  \tikzset{Enemy/.style   = {
                                 double          = red,
                                 double distance = 1pt}}

  \draw[Friend](1) to (2); 
  \draw[Friend](1) to (3);
  \draw[Friend](1) to (4);
  \draw[Friend](2) to (3);
  \draw[Friend](2) to (4);
  \draw[Friend](3) to (4);

  \draw[Friend](5) to (6);
  \draw[Friend](5) to (7);
  \draw[Friend](6) to (7);

  \draw[Enemy](1) to (5);
  \draw[Enemy](1) to (6);
  \draw[Enemy](3) to (6);
  \draw[Enemy](3) to (7);
  \draw[Friend](4) to (6);
\end{tikzpicture}
    \caption[Balanced and unbalanced networks]{Balanced and imbalanced
      networks. The left network is 
      opposition bipartite (balanced) while the right is neither
      opposition bipartite nor reverse opposition bipartite. In
      particular, agents $3$ and $4$ have mutual `friends' (agents
      $1$, $2$) while agent $6$ is in $3$'s out-group and
      $4$'s in-group. Alternatively: agents $4$ and $6$, from
    different positively linked factions, have befriended each other.}
    \label{fig:unbalanced}
  \end{figure}

We make an additional technical assumption here and in the remainder,
namely, 
we
generally assume that the social networks wherein agents interact are
\emph{aperiodic}, that is, the greatest
  common divisor of the lengths of their simple cycles is $1$.
Our main theorem in this context 
exhaustively categorizes
long-run opinions in terms of 
three different social network
structures as outlined in the theorem. 
In the
theorem, recall that a graph 
$G$ 
is said to be 
\emph{(strongly) connected} if there
  exists a path in $G$ from any node to any other node. 
Moreover, in the theorem, 
  we say that $\mathbf{W}\odot \mathbf{F}$ is
  \emph{convergent} if $\mathbf{W}\odot \mathbf{F}$ is convergent
  for {all} initial opinion vectors $\mathbf{b}(0)\in S^n$, that
  is,
  $\lim_{t\goesto\infty}(\mathbf{W}\odot\mathbf{F})^t\mathbf{b}(0)$
  exists for all $\mathbf{b}(0)\in S^n$. 
  We say that  
  $\mathbf{W}\odot\mathbf{F}$ \emph{induces a consensus} if
  $\mathbf{W}\odot\mathbf{F}$ induces a consensus for {all} initial
  opinion vectors $\mathbf{b}(0)\in S^n$, that is,
  $\mathbf{W}\odot\mathbf{F}$ is convergent and
  $\lim_{t\goesto\infty}(\mathbf{W}\odot\mathbf{F})^t\mathbf{b}(0)$ is
  always a {consensus}. 
  We say that $\mathbf{W}\odot \mathbf{F}$ is
  \emph{divergent} if 
  it 
  is not
  convergent. In other words,   $\mathbf{W}\odot \mathbf{F}$ is
  divergent 
  when there exists a vector $\mathbf{b}(0)$ such that
  $\lim_{t\goesto\infty}(\mathbf{W}\odot \mathbf{F})^t\mathbf{b}(0)$
  does not exist. We say that $\mathbf{W}\odot\mathbf{F}$ \emph{induces a
      polarization} if
      $\lim_{t\goesto\infty}(\mathbf{W}\odot\mathbf{F})^t\mathbf{b}(0)$
      is a polarization, for all initial opinion vectors
      $\mathbf{b}(0)$. We say that $\mathbf{W}\odot\mathbf{F}$ induces
      a \emph{non-zero polarization} when $\mathbf{W}\odot\mathbf{F}$ induces
      a polarization and
      $\lim_{t\goesto\infty}(\mathbf{W}\odot\mathbf{F})^t\mathbf{b}(0)\neq
      \mathbf{0}$ for some initial opinion vectors
      $\mathbf{b}(0)$. 
\begin{theorem}\label{theorem:main}
  Let $\mathbf{W}\odot\mathbf{F}\in \SSLS(S_0)$.
  Assume that $\mathbf{W}\odot\mathbf{F}$ is strongly connected (since
  $\mathbf{A}$ is symmetric, we might also simply say `connected') and
  aperiodic. 
  Then: 
    \begin{itemize}
      \item[(i)] $\mathbf{W}\odot\mathbf{F}$ induces a 
  polarization (that is not always zero) if and only if
  $\mathbf{W}\odot\mathbf{F}$ is opposition 
  bipartite. 
  \item[(ii)] $\mathbf{W}\odot\mathbf{F}$ is divergent if and only if
    $\mathbf{W}\odot\mathbf{F}$ is reverse opposition 
    bipartite.  
  \item[(iii)] $\mathbf{W}\odot\mathbf{F}$ induces a
  neutral\footnote{Here, we call a consensus $(c,\ldots,c)$ 
    neutral
    if $\mathfrak{D}(c)=c$. For soft opposition on $S_0$, $c=0$.} consensus if
  and only if $\mathbf{W}\odot\mathbf{F}$ is 
  neither opposition
  bipartite nor reverse opposition bipartite. 
    \end{itemize}
\end{theorem}
For better understanding, we give alternative characterizations of
conditions (i) and (ii) of Theorem \ref{theorem:main}, which 
follow from the theorem and its proof. Namely, we find that:  
\begin{itemize}
  \item[(i)]
  If and only if $\mathbf{W}\odot\mathbf{F}$ is opposition bipartite,
  the following holds: 
  $\mathbf{b}(\infty)=\lim_{t\goesto\infty}(\mathbf{W}\odot\mathbf{F})\mathbf{b}(0)$
  exists for all $\mathbf{b}(0)\in S^n$ and $\mathbf{b}(\infty)$
  is then always either the zero vector (e.g., when $\mathbf{b}(0)$ is
  the zero vector; but it is not the zero vector for all
  $\mathbf{b}(0)\in S^n$) or is some polarization vector where each entry is
  $a>0$ or $-a$. Moreover, when $W_{i,\text{Out}(i)}>0$ for some
  $i\in[n]$, then, if $\mathbf{b}(\infty)$ is a non-zero polarization vector,
  both $a$ and $-a$ are components of $\mathbf{b}(0)$; otherwise
  $\mathbf{b}(\infty)$ is always a consensus, as in the standard
  DeGroot model.
  \item[(ii)] If and only if $\mathbf{W}\odot\mathbf{F}$ is reverse opposition
  bipartite,  
  the following holds:
  $\lim_{t\goesto\infty}(\mathbf{W}\odot\mathbf{F})\mathbf{b}(0)$ does
  not exist for all $\mathbf{b}(0)$ in $S^n$. Moreover, when it exists
  it is the zero vector. 
\end{itemize}

The fact that polarization requires `exact' balance (opposition
bipartiteness) and admits not a `grain of imbalancedness', as stated in 
Theorem \ref{theorem:main} and exemplified in Figure
\ref{fig:unbalanced}, may appear odd since one might expect, in 
reality, small perturbations to balance --- e.g., small-scale intra-group
antagonisms or individual friendships among enemies --- to be the rule 
rather than the exception, particularly in 
large enough systems.
We note that this result
is closely connected to the  
continuous opinion spectrum and the corresponding averaging 
updating that we have considered in this section. 
If one thinks that reality is better perceived of as 
discrete, with weighted majority voting as a more
plausible 
opinion updating mechanism, then 
it is apparent that 
the 
discrete model \emph{is} clearly robust against small such perturbations, 
so that polarizing viewpoints can be Nash equilibria in this case
even when the underlying networks exhibit (marginal)
imbalancedness. 

We also note that our results may be generalized to \emph{periodic}
graphs (those that are not aperiodic) and to graphs that are not
connected. We leave this to future work. 

\subsection{Social influence and opinion
  leadership}\label{section:influence} 

In DeGroot learning, one of the important questions has been that of
\emph{opinion leadership}: whose agents' initial opinions have most
impact upon resulting limiting (long-run) opinions and how does this
depend on the 
network structure in which the agents are embedded. In the context of
connected 
$\SSLS(S_0)$ networks as we have defined above, this question admits an
elegant solution in our extended DeGroot model with in-group/out-group
relationships. Namely, if the network is reverse opposition bipartite,
the updating operator diverges (for at least some initial opinion
vectors) and opinion leadership is thus not 
well-defined. If, in contrast, the network is opposition bipartite and
aperiodic, opinion leadership is determined by \emph{eigenvector
  centrality} \parencite{Bonacich1972} exactly in the same way as in the
original DeGroot model, except that a plus or minus sign indicates the
group membership of the agents. Finally, if none of those two
conditions hold, then no agent is influential, since agents will
converge to a fixed-point of the deviation function no matter their
initial opinions. 
The opposition bipartite case is the focus of the next theorem. 

\begin{theorem}\label{theorem:social_influence}
  Let $\mathbf{W}\odot\mathbf{F}\in \SSLS(S_0)$.
  Assume that $\mathbf{W}\odot\mathbf{F}$ is strongly
  connected and aperiodic. 
  Then the following are equivalent:
  \begin{itemize}
    \item[(i)]
  There exists a unique (nonnegative) left unit
      eigenvector $\mathbf{s}$ of ${\length{\mathbf{A}}}$, the matrix
      with entries $|A_{ij}|$, whose
      entries sum to $1$ such that 
      each agent holds one of two 
      long-run 
      opinion values 
      $a$ and $b$ ($b=-a$), given by
      \begin{align*}
        a & = \sum_{j\in[n]} g(j)s_jb_j(0),\\
        b & = \sum_{j\in[n]} (-g(j))s_jb_j(0),
      \end{align*}
      where $g(j)\in\set{\pm 1}$.  
    \item[(ii)] $\mathbf{W}\odot\mathbf{F}$ is opposition
      bipartite. 
  \end{itemize}
\end{theorem} 
Since in case of opposition bipartite and aperiodic strongly connected
networks there are two long-run opinion values and an agent $j$'s
`influence' on each of them is $g(j)s_j$ and $-g(j)s_j$, respectively, where
$g(j)\in\set{\pm 1}$, we may speak of agent $j$'s \emph{absolute
  power} $\length{v_j} = \length{g(j)s_j}=\length{-g(j)s_j}=s_j$.  
Thus, in summary, 
the last theorem and our previous discussion lead to the following
characterization for strongly 
connected networks $\mathbf{W}\odot\mathbf{F}\in \SSLS(S_0)$:
 \begin{itemize}
    \item[(i)] 
      Each agent's \emph{absolute power} 
      $\length{v_i}$ is given by $\length{v_i}=s_i$
      where $\mathbf{s}$ is the unique left unit eigenvector of
      $\length{\mathbf{A}}$ with normalization $\sum_{i\in[n]}s_i=1$
      if and only if $\mathbf{W}\odot\mathbf{F}$ is opposition
      bipartite.
    \item[(ii)] Opinion leadership is not (well-)defined if and only
      if $\mathbf{W}\odot\mathbf{F}$ is reverse opposition
      bipartite. 
    \item[(iii)] Each agent's power is given by $v_i=0$ 
      if and only if $\mathbf{W}\odot\mathbf{F}$ is neither opposition
      bipartite nor reverse opposition bipartite.
  \end{itemize} 

It is noteworthy that in case (i), absolute power is independent of
the kinds of relationships between agents and only depends on their
intensities. In other words, 
an agent may also be prominent when she attracts strong negative links.
When society partitions into several subsocities, 
these
results may be applied independently to each of them. 


\section{Concluding remarks}\label{sec:conclusions}
 Opinions are important in an economic context (and other contexts)
since they shape the demand for products, set the political
course, and guide, in general, socio-economic behavior. 
Models of opinion 
dynamics describe how 
individuals form opinions or beliefs about an underlying state or a
discussion topic.
Typically, in the social networks literature,
subjects may 
communicate with other individuals, their peers, in
this context, enabling them to 
aggregate dispersed information. Bayesian models of opinion formation
assume that agents 
form their opinions in a fully rational manner and have an accurate
`model of the world' at their disposal, both of which are questionable
and unrealistic assumptions for 
actual social learning processes of
human 
individuals. 
 Non-Bayesian
models, and    
most prominently the classical DeGroot model of opinion formation, 
while also not unproblematic \parencite[cf.][]{Acemoglu2011}, posit that
agents  
employ simple `rule-of-thumb' heuristics to integrate the opinions of
others. Unfortunately, both the non-Bayesian and Bayesian paradigms
typically lead individuals to a 
consensus, which apparently contradicts the facts as people 
disagree with others on many issues of (everyday) life. In the context
of DeGroot learning models, some approaches 
can address this, 
either by assuming a homophily principle whereby agents limit
their communication to those who hold similar opinions as themselves
or by introducing stubborn agents, modeling, e.g., opinion leaders,
who never update their opinions. Both approaches are, again, 
debatable since 
they assume a complete lack of flow of information between some classes of
agents (from some 
time onwards). 
In addition, 
models based on homophily and stubborn
agents both ignore \emph{negative} relationships between individuals
as potential sources for conflict and disagreement.

In the current work, we have 
investigated opinion dynamics under
out-group discrimination (in-group bias)
as such a potentially alternative explanation for
disagreement. In our 
setup, agents are driven by two forces: they want 
to adjust their opinions to match those of the agents of 
their in-group 
and, in addition, they want to
adjust their opinions to match the `inverse' of those 
of the agents of 
their out-group. 
Best responses
in this setting lead us to a DeGroot-like opinion updating process 
in which 
agents form their next period opinions via weighted 
averages of their neighbors' (possibly inverted) opinion signals. 
Unlike in the standard DeGroot model where opinions may converge to
arbitrary consensus opinion profiles, in our model only neutral
consensus opinion profiles may be attained in the long-run, that is,
consensus vectors where the consensus opinion is a fixed-point of each
agent's deviation function (modeling the mode of opposition
between agents). Thus, if there exists no opinion that is `globally' neutral in
this sense, our model predicts persistent disagreement provided that
negative 
relations between agents are sufficiently strong. When we specialize our
model to undirected networks containing no self-loops and where the only
allowable deviation function is `soft opposition', we derive necessary
and sufficient conditions for bi-polarization in connected
societies. These say that long-run opinions bi-polarize if and only if the
underlying network wherein agents communicate satisfies `opposition
bipartiteness': it consists of two groups of agents exhibiting 
positive within-group links and negative between-group links. 
We also investigate social influence in this
special case, that is, the question of whose initial opinions matter
most for resulting long-run opinions. We find that in opposition
bipartite networks (satisfying aperiodicity), opinion leadership, in
terms of \emph{absolute power}, is
determined by eigenvector centrality exactly in the same way as in the
standard DeGroot model.
This means that an agent is prominent to the degree
that she is interlinked with prominent agents; it is noteworthy that
even
strong negative ties increase prominence. 

Finally,
considering the question of \emph{wisdom} (\cite{Golub2010,Jadbabaie2012,Buechel2015}), that is, whether agents can
(jointly) learn the true state of nature of their discussion topic,
provided that such a truth exists, we observe the following. 
The case for wisdom is a
weak one in our model since negative ties typically prevent consensus
formation, so clearly not everybody can be wise in the
long-run. 
 This holds \emph{even when agents are
initially perfectly informed} in the sense that their initial beliefs
coincide with truth. In
particular, if agents 
(multi-)polarize, then at most one group of agents may be wise in the
long-run, but due to the mutual dependence of long-run beliefs, we might
suspect none to be.\footnote{In particular, observe that an agent's
wisdom is expected to decrease in her amount of in-group bias (i.e., how
strongly she discriminates against out-group individuals) provided
that out-group individuals' opinions are close to truth, because this
controls how 
strongly she desires to match the `opposite' of truth. From a reverse
perspective, her wisdom is expected to increase in her out-groups' biases.} 
Ultimately, this result must be interpreted by reference to 
the rationality of the agents involved. The standard interpretation
of DeGroot learners is that of na\"ive individuals susceptible to
persuasion bias. \textcite{Golub2010} show that such agents can
still learn the true state of nature under not too demanding
conditions. \textcite{Jadbabaie2012} show that slightly more
rationality increases the case for wisdom. In contrast, we show that
an additional bias such as in-group bias may significantly worsen this
case.  


Concerning future research directions within our context,
both weight 
links and opposition links between 
agents, $\mathbf{W}$ and $\mathbf{F}$, have been assumed exogenous in
the current work. 
Prospectively, 
it might be worthwhile to consider endogenous link
formation processes. In particular, the origin and evolution of
out-group relations, 
and 
their 
interdependence with    
agents' opinions and external
factors, such as, most importantly, 
external truth, might be of interest.

\section*{Acknowledgments}
I am indebted to the associate editor Juan D.~Moreno-Ternero
and two anonymous reviewers for various comments that have greatly
improved the layout of the current work. 
I also thank Matthias Blonski for valuable suggestions that led
to improvements in the paper. All remaining errors are my
own.

\begin{appendices}
\section{Definitions, theorems, and proofs}\label{sec:appendixChap1}
Sections \ref{sec:graphs} and \ref{sec:matrix} review notation and concepts for
\emph{graphs} and \emph{matrices}, respectively. Section
\ref{sec:signed} states basic results on signed social networks from
\textcite{Altafini2013}. Finally, Section \ref{sec:proofs} provides
the proofs of our own results in this work. 
\subsection{Graphs}\label{sec:graphs}
\begin{definition}[(Weighted) Network]
  A \emph{network}, or \emph{graph}, is a tuple $G=(V,E)$ where $V$
  is a finite set and
  $E\subseteq V\times V=\set{(u,v)\sd u,v\in V}$. We call $V$ the
  \emph{vertices} or \emph{nodes} of graph $G$ and $E$ the
  \emph{edges} or \emph{links} of $G$. 
  Moreover, we call the network $G$ \emph{weighted} if there
  exist \emph{weights} 
  $w_{uv}$ for each edge $(u,v)\in E$.\footnote{Weights may typically be real
  numbers but we more generally allow them to be arbitrary
  mathematical objects.} 
\end{definition}
In 
a \emph{multigraph}, instead of having only one link type between
nodes, there may exist multiple link types. The networks
investigated in this work may
be considered multigraphs with exactly two types of links, one
denoting intensity of connection and one denoting kind of connection. 
\begin{definition}
  A \emph{walk} in a network $G=(V,E)$ is a sequence of nodes
  $i_1,i_2,\dotsc,i_K$, not necessarily distinct, such that
  $(i_k,i_{k+1})\in E$ for all 
  $k\in\set{1,\dotsc,K-1}$. 
  A \emph{path} is a walk consisting of distinct nodes. 
  A \emph{cycle} is a walk $i_1,\dotsc,i_K$ such that $i_1=i_K$. The
  \emph{length} of cylce $i_1,\dotsc,i_K$ is defined to be $K-1$. A
  cycle is called \emph{simple} if the only node appearing twice is
  $i_1=i_K$. 
\end{definition} 
\begin{remark}
We 
use the same terminology --- `strongly
connected', `aperiodic', etc. --- whether we speak of (our)
multigraphs, in which there exist exactly two types of relationships
between agents, or
ordinary graphs. In the case of multigraphs, 
we refer to their underlying ordinary graphs. 
We also 
use the same
terminology for 
$n\times n$ matrices $\mathbf{A}$ and their underlying graphs
$([n],\set{(i,j)\,|\, A_{ij}\neq 0})$.
\end{remark}
\subsection{Matrix and Markov chain theory}\label{sec:matrix}
We 
first state one of the main theorems for the DeGroot
updates \eqref{eq:matrix2} in the non-opposition case
(cf.\ \textcite{Golub2010}). We assume that 
$\mathbf{W}$ is row-stochastic. 
\begin{theorem}\label{th:trust1}
  Consider the opinion updating process \eqref{eq:matrix2} with
  $F_{ij}=\mathfrak{F}$ 
  for all $i,j\in[n]$, where $\mathfrak{F}$ is the identity function. Let the
  multigraph  
  corresponding to the operator
  $\mathbf{W}\odot\mathbf{F}=\mathbf{W}$ --- an ordinary graph --- be
  strongly connected and 
  aperiodic. Then $\mathbf{W}\odot\mathbf{F}$ is convergent and
  induces a consensus. 
\end{theorem}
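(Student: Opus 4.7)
The plan is to reduce the claim to the standard Perron--Frobenius / Markov-chain convergence theorem for stochastic matrices. Since $F_{ij}=\mathfrak{F}$ is the identity function for all $i,j$, the operator $\mathbf{W}\odot\mathbf{F}$ acts on an opinion vector $\mathbf{b}\in S^n$ by ordinary matrix multiplication: $[(\mathbf{W}\odot\mathbf{F})\mathbf{b}]_i = \sum_{j=1}^n W_{ij}b_j$. Hence $\mathbf{b}(t) = \mathbf{W}^t\mathbf{b}(0)$, and the statement reduces to showing that the matrix power $\mathbf{W}^t$ converges to a rank-one matrix of the form $\mathbf{1}\boldsymbol{\pi}^\intercal$, where $\mathbf{1}=(1,\ldots,1)^\intercal$ and $\boldsymbol{\pi}\in\real^n$ is some probability vector.

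First, I would interpret $\mathbf{W}$ as the transition matrix of a finite-state Markov chain, which is legitimate because $\mathbf{W}$ is row-stochastic. The assumption that the graph corresponding to $\mathbf{W}$ is strongly connected translates directly to irreducibility of this Markov chain: one can move from any state $i$ to any state $j$ in a finite number of steps with positive probability, because strong connectivity provides a directed path $i = i_1, i_2, \ldots, i_K = j$ with $W_{i_k i_{k+1}}>0$. The aperiodicity assumption on the graph (the greatest common divisor of the lengths of its simple cycles is $1$) corresponds exactly to the Markov-chain notion of aperiodicity of the state space.

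Next, by the Perron--Frobenius theorem for irreducible aperiodic nonnegative matrices (equivalently, the fundamental convergence theorem for finite irreducible aperiodic Markov chains), $\mathbf{W}$ has a simple dominant eigenvalue $1$, all other eigenvalues have modulus strictly less than $1$, and there exists a unique left eigenvector $\boldsymbol{\pi}>0$ with $\boldsymbol{\pi}^\intercal\mathbf{W}=\boldsymbol{\pi}^\intercal$ and $\sum_i\pi_i=1$, for which
\begin{align*}
\lim_{t\goesto\infty}\mathbf{W}^t = \mathbf{1}\boldsymbol{\pi}^\intercal.
\end{align*}
This is a standard fact I would quote rather than reprove.

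Finally, combining these facts, for any initial opinion vector $\mathbf{b}(0)\in S^n$,
\begin{align*}
\lim_{t\goesto\infty}\mathbf{b}(t) = \lim_{t\goesto\infty}\mathbf{W}^t\mathbf{b}(0) = \mathbf{1}\bigl(\boldsymbol{\pi}^\intercal\mathbf{b}(0)\bigr),
\end{align*}
which exists (so $\mathbf{W}\odot\mathbf{F}$ is convergent) and is a consensus vector with common value $c = \sum_{j=1}^n \pi_j b_j(0)$. The main obstacle in this proof is essentially bundled inside the Perron--Frobenius citation; the translation between graph-theoretic hypotheses (strong connectivity, aperiodicity of cycle lengths) and Markov-chain hypotheses (irreducibility, aperiodic state space) is routine but worth stating explicitly, and closedness of $S$ under convex combinations guarantees that the limit actually lies in $S$.
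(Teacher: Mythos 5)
Your proof is correct and is exactly the standard argument the paper implicitly relies on: the paper states Theorem \ref{th:trust1} without proof, as a classical result cited to Golub and Jackson, and your Perron--Frobenius/Markov-chain derivation of $\lim_{t\to\infty}\mathbf{W}^t=\mathbf{1}\boldsymbol{\pi}^\intercal$ is precisely what that citation packages. The only step you wave at rather than carry out --- that the paper's graph-theoretic aperiodicity (gcd of simple cycle lengths equal to $1$) coincides with the Markov-chain period of the irreducible chain --- is indeed routine (every closed walk decomposes into simple cycles, so the two gcd's agree), so no genuine gap remains.
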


In case $\mathbf{W}\odot\mathbf{F}$ is an affine-linear
map, whether or not $\mathbf{W}\odot\mathbf{F}$ 
converges 
can be fully determined by reference the 
notion of
eigenvalues, which we introduce now.    
\begin{definition}
  Let $\mathbf{A}\in\real^{n\times n}$ be an $n\times n$ matrix. An
  eigenvalue of $\mathbf{A}$ is any value $\lambda\in \complex$ such
  that 
    $\mathbf{Ax} = \lambda\mathbf{x}$
  for some non-zero vector $\mathbf{x}\in\real^n$. The set of distinct
  eigenvalues of matrix $\mathbf{A}$ is called its \emph{spectrum} 
  and denoted by $\sigma(\mathbf{A})$. By
  $\rho(\mathbf{A})$, we denote the \emph{spectral radius} of
  $\mathbf{A}$, the largest absolute value of all the eigenvalues of
  $\mathbf{A}$, that is, 
    $\rho(\mathbf{A}) = \max\set{\abs{\lambda}\sd
      \lambda\in\sigma(\mathbf{A})}$. 
\end{definition}
\begin{theorem}[\cite{Meyer2000}, p.630]\label{theorem:spectral}
  For $\mathbf{A}\in\real^{n\times n}$, 
  $\lim_{t\goesto\infty}\mathbf{A}^t$ exists if and only if
  \begin{align*}
    &\rho(\mathbf{A})<1,\quad\text{or else},\\
    &\rho(\mathbf{A})=1 \text{ and $\lambda=1$ is the only eigenvalue
      on the unit circle, and $\lambda=1$ is semisimple},
  \end{align*}
  where an eigenvalue is called \emph{semisimple} if its algebraic
  multiplicity equals its geometric multiplicity. The \emph{algebraic
  multiplicity} of an eigenvalue $\lambda$ is the number of times it is
  repeated as a root of the characteristic polynomial
  $\chi(\lambda)=\det{(\mathbf{A}-\lambda\textbf{I}_n)}$, where
  $\mathbf{I}_n$ is the 
    $n\times n$ identity matrix. The \emph{geometric multiplicity} is
  the 
  number of linearly independent eigenvectors associated with
  $\lambda$. 
\end{theorem}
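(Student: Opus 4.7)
The plan is to reduce the problem to analyzing powers of a single Jordan block via the Jordan canonical form of $\mathbf{A}$. Write $\mathbf{A}=\mathbf{P}\mathbf{J}\mathbf{P}^{-1}$ where $\mathbf{J}$ is block-diagonal with Jordan blocks $\mathbf{J}_{m_i}(\lambda_i)$. Since $\mathbf{A}^t=\mathbf{P}\mathbf{J}^t\mathbf{P}^{-1}$ and $\mathbf{P}$ is invertible and independent of $t$, the limit $\lim_{t\goesto\infty}\mathbf{A}^t$ exists if and only if $\lim_{t\goesto\infty}\mathbf{J}^t$ exists, which by block-diagonality holds if and only if $\lim_{t\goesto\infty}\mathbf{J}_{m_i}(\lambda_i)^t$ exists for each Jordan block.

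Next I would write an explicit formula for the $t$-th power of a Jordan block. Since $\mathbf{J}_m(\lambda)=\lambda\mathbf{I}_m+\mathbf{N}$, where $\mathbf{N}$ is the nilpotent shift with $\mathbf{N}^m=\mathbf{0}$, and $\lambda\mathbf{I}_m$ commutes with $\mathbf{N}$, the binomial theorem gives
\begin{align*}
\mathbf{J}_m(\lambda)^t=\sum_{k=0}^{m-1}\binom{t}{k}\lambda^{t-k}\mathbf{N}^k,
\end{align*}
so the distinct entries of $\mathbf{J}_m(\lambda)^t$ are exactly $\binom{t}{k}\lambda^{t-k}$ for $k=0,\ldots,m-1$. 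Thus convergence of the block reduces to convergence of each of these scalar sequences.

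I would then carry out a short case analysis on the location of $\lambda$. If $\abs{\lambda}<1$, polynomial-times-exponential decay yields $\binom{t}{k}\lambda^{t-k}\to 0$ for every $k$, so $\mathbf{J}_m(\lambda)^t\to\mathbf{0}$ regardless of $m$. If $\abs{\lambda}>1$, then $\abs{\lambda^t}\to\infty$, so the $(1,1)$-entry diverges. If $\abs{\lambda}=1$ and $\lambda\neq 1$, then $\lambda^t$ fails to converge (for $\lambda$ not a root of unity the orbit is dense on the unit circle; for $\lambda$ a primitive $d$-th root of unity with $d>1$ the orbit cycles through $d$ distinct values), so again the $(1,1)$-entry diverges. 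Finally, if $\lambda=1$, the $(1,1)$-entry equals $1$ for all $t$, but whenever $m\ge 2$ the $(1,2)$-entry is $\binom{t}{1}=t$, which diverges; only $m=1$ yields a convergent block, namely the constant block $[1]$.

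Combining the cases, $\lim_{t\goesto\infty}\mathbf{A}^t$ exists if and only if every Jordan block of $\mathbf{A}$ satisfies either $\abs{\lambda_i}<1$, or $\lambda_i=1$ with $m_i=1$. The first clause gives $\rho(\mathbf{A})<1$; the second clause (together with no eigenvalue of modulus $1$ different from $1$) gives $\rho(\mathbf{A})=1$ with $\lambda=1$ the only unimodular eigenvalue, and the requirement that every Jordan block at $\lambda=1$ be $1\times 1$ is precisely the definition of $\lambda=1$ being semisimple (algebraic equals geometric multiplicity). This yields the two disjoint conditions stated in the theorem. The only delicate step is ruling out convergence when $\abs{\lambda}=1,\lambda\ne 1$; everything else is a direct consequence of the Jordan-block formula above.
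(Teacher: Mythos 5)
Your proof is correct. The paper does not prove this statement itself --- it cites it directly from Meyer (2000, p.~630) --- and your Jordan-form argument (reduce to a single block, expand $(\lambda\mathbf{I}+\mathbf{N})^t$ by the binomial theorem, and do the case analysis on $\abs{\lambda}$, with the only delicate case being $\abs{\lambda}=1$, $\lambda\neq 1$) is precisely the standard textbook proof of that result, so there is nothing to add.
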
 
\subsection{Signed networks}\label{sec:signed}
Here, we assume social networks $\mathbf{W}\odot\mathbf{F}$ such that
$F_{ij}\in\set{\mathfrak{F},\mathfrak{D}}$ where $\mathfrak{D}$ is
soft opposition on $S_0$, i.e., $\mathfrak{D}(x)=-x$. 
Such operators admit a matrix
representation $\mathbf{A}$ in which each entry has a positive or
negative sign (or is zero), see
Proposition \ref{prop:consensusOpposition}. We assume that
$\mathbf{A}$ is connected and aperiodic. 
\begin{lemma}\label{lemma:dad}
  Let $\mathbf{W}\odot\mathbf{F}$ be 
  such that $A_{ii}=0$ and
  $A_{ij}=A_{ji}$. Then, $\mathbf{W}\odot\mathbf{F}$ is opposition
  bipartite if and only if there exists a diagonal matrix
  $\mathbf{\Delta}$ such that
  $\mathbf{\Delta}\mathbf{A}\mathbf{\Delta}=\length{\mathbf{A}}$,
  where $\length{\mathbf{A}}$ denotes the matrix with entries
  $\length{A_{ij}}$. 
\end{lemma}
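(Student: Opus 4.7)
The plan is to prove both directions by making the signature matrix explicit through the opposition bipartite partition (and vice versa), relying on the key observation that under soft opposition $\mathfrak{D}(x)=-x$, the sign of $A_{ij}$ encodes the link type $F_{ij}$ whenever $W_{ij}>0$: namely, $A_{ij}=W_{ij}$ if $F_{ij}=\mathfrak{F}$ and $A_{ij}=-W_{ij}$ if $F_{ij}=\mathfrak{D}$, so $A_{ij}>0$, $A_{ij}<0$, or $A_{ij}=0$ correspond respectively to a positive link, a negative link, or no link between $i$ and $j$.

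For the forward direction, I would assume $\mathbf{W}\odot\mathbf{F}$ is opposition bipartite with partition $(\struct{N}_1,\struct{N}_2)$, and define the diagonal matrix $\mathbf{\Delta}$ by $\Delta_{ii}=+1$ for $i\in\struct{N}_1$ and $\Delta_{ii}=-1$ for $i\in\struct{N}_2$. Then $[\mathbf{\Delta}\mathbf{A}\mathbf{\Delta}]_{ij}=\Delta_{ii}\Delta_{jj}A_{ij}$, and a quick case check confirms this equals $\length{A_{ij}}$ in each of the four cases: when $i,j$ lie in the same $\struct{N}_k$, we have $\Delta_{ii}\Delta_{jj}=+1$ and $A_{ij}\ge 0$ (by definition of opposition bipartiteness); when $i,j$ lie in different parts, we have $\Delta_{ii}\Delta_{jj}=-1$ and $A_{ij}\le 0$. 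Either way $\Delta_{ii}\Delta_{jj}A_{ij}=\length{A_{ij}}$.

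For the reverse direction, suppose $\mathbf{\Delta}\mathbf{A}\mathbf{\Delta}=\length{\mathbf{A}}$. I would first handle degenerate diagonal entries: if $\delta_i:=\Delta_{ii}=0$ for some $i$, then the $i$th row of $\mathbf{\Delta}\mathbf{A}\mathbf{\Delta}$ vanishes, forcing $\length{A_{ij}}=0$ for all $j$, so $i$ is an isolated node that may be assigned arbitrarily to either partition. For non-isolated nodes $\delta_i\neq 0$, set $\struct{N}_1=\set{i:\delta_i>0}$ and $\struct{N}_2=\set{i:\delta_i<0}$. Since $\delta_i\delta_jA_{ij}=\length{A_{ij}}\ge 0$, whenever $A_{ij}\neq 0$ we obtain $\text{sgn}(A_{ij})=\text{sgn}(\delta_i\delta_j)$. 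Thus for $i,j$ in the same part, $A_{ij}\ge 0$, giving $F_{ij}=\mathfrak{F}$ when $W_{ij}>0$; and for $i,j$ in different parts, $A_{ij}\le 0$, giving $F_{ij}=\mathfrak{D}$ when $W_{ij}>0$. This is precisely the definition of opposition bipartiteness.

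The only subtle point — and the part I would flag as the main obstacle — is the treatment of isolated nodes and of diagonal entries $\delta_i$ that need not be $\pm 1$ a priori; the hypothesis $A_{ii}=0$ is what ensures the diagonal identity $\delta_i^2\cdot 0=0$ imposes no constraint, so the sign pattern of $\mathbf{\Delta}$ (not its magnitudes) is all that matters, and the symmetry $A_{ij}=A_{ji}$ guarantees consistency of the sign propagation across any connected component. Everything else reduces to the elementary sign bookkeeping above.
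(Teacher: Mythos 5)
Your proof is correct. Note that the paper itself does not prove Lemma \ref{lemma:dad}: it appears in Section \ref{sec:signed} among the results imported from \textcite{Altafini2013} and is left without proof, so there is no in-paper argument to compare against. What you give is the standard gauge-transformation (signature-matrix) proof of structural balance, which is exactly the argument underlying Altafini's Lemma 1: in the forward direction you build $\mathbf{\Delta}=\diag(\pm 1)$ from the bipartition, and in the reverse direction you read the bipartition off the signs of the $\delta_i$, using $\delta_i\delta_j A_{ij}=\length{A_{ij}}\ge 0$ together with the fact that under soft opposition $\text{sgn}(A_{ij})$ encodes $F_{ij}$ precisely when $W_{ij}>0$, i.e.\ precisely when $A_{ij}\neq 0$ --- which is also exactly the set of pairs that the definition of opposition bipartiteness constrains. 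Two details you handle are genuinely needed and not mere pedantry: the lemma as stated quantifies over \emph{arbitrary} diagonal $\mathbf{\Delta}$ (the restriction to $\pm 1$ entries is only added parenthetically later, in the proof of Theorem \ref{theorem:social_influence}), so your disposal of zero diagonal entries via isolated nodes closes a real gap in the literal statement; and the possible emptiness of one side of the partition causes no trouble in either direction. Your closing remark about ``sign propagation across connected components'' is superfluous --- no propagation is required, since $\mathbf{\Delta}$ is given rather than constructed edge by edge --- but it is harmless.
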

%
\begin{lemma}\label{lemma:similarityTransform}
  Let $\length{\mathbf{A}}=\mathbf{\Delta}\mathbf{A}\mathbf{\Delta}$
  as in Lemma \ref{lemma:dad}. Then $\mathbf{A}$ and
  $\length{\mathbf{A}}$ have the same eigenvalues with the same
  multiplicities. 
\end{lemma}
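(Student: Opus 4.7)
The plan is to show that $\mathbf{\Delta}$ in the identity $|\mathbf{A}| = \mathbf{\Delta}\mathbf{A}\mathbf{\Delta}$ is in fact its own inverse, turning the stated equality into an honest similarity transformation, from which the conclusion about eigenvalues follows by standard linear algebra.

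First I would argue that the diagonal entries of $\mathbf{\Delta}$ can be taken to lie in $\{-1,+1\}$. This is inherent in the construction used in Lemma \ref{lemma:dad}: the matrix $\mathbf{\Delta}$ encodes the partition $(\struct{N}_1,\struct{N}_2)$ of an opposition bipartite network by assigning $+1$ to agents in $\struct{N}_1$ and $-1$ to those in $\struct{N}_2$ (or vice versa). Equivalently, reading the identity entry-wise, $\Delta_{ii}\Delta_{jj}A_{ij} = |A_{ij}|$ for all $i,j$; for any $(i,j)$ with $A_{ij}\neq 0$ we obtain $\Delta_{ii}\Delta_{jj}=\mathrm{sgn}(A_{ij})\in\{\pm 1\}$, and for those indices not pinned down we are free to set $\Delta_{ii}=\pm 1$ without loss of generality. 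Consequently $\mathbf{\Delta}^2 = \mathbf{I}_n$, so $\mathbf{\Delta}$ is invertible with $\mathbf{\Delta}^{-1}=\mathbf{\Delta}$.

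Second, substituting this into the identity of Lemma \ref{lemma:dad} yields
\begin{align*}
|\mathbf{A}| \;=\; \mathbf{\Delta}\mathbf{A}\mathbf{\Delta} \;=\; \mathbf{\Delta}\mathbf{A}\mathbf{\Delta}^{-1}.
\end{align*}
Thus $|\mathbf{A}|$ and $\mathbf{A}$ are similar matrices. From standard matrix theory, similar matrices share the same characteristic polynomial, since
\begin{align*}
\det(|\mathbf{A}|-\lambda\mathbf{I}_n) \;=\; \det\bigl(\mathbf{\Delta}(\mathbf{A}-\lambda\mathbf{I}_n)\mathbf{\Delta}^{-1}\bigr) \;=\; \det(\mathbf{A}-\lambda\mathbf{I}_n),
\end{align*}
which immediately gives equal spectra with equal algebraic multiplicities. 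For the geometric multiplicities, I would note that $\mathbf{A}\mathbf{x}=\lambda\mathbf{x}$ if and only if $|\mathbf{A}|(\mathbf{\Delta}\mathbf{x})=\lambda(\mathbf{\Delta}\mathbf{x})$; since $\mathbf{\Delta}$ is a bijection on $\real^n$, this establishes a linear isomorphism between the $\lambda$-eigenspaces of $\mathbf{A}$ and $|\mathbf{A}|$, so their dimensions coincide.

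There is no real obstacle here — the only point deserving care is justifying that $\mathbf{\Delta}$ has entries in $\{\pm 1\}$ rather than arbitrary nonzero diagonal values, which is essentially a restatement of the construction in Lemma \ref{lemma:dad}; once that is in hand, the remainder is a one-line similarity argument.
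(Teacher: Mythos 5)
The paper states this lemma without its own proof (it is imported as a standard fact from the signed-network literature), and your argument is precisely the intended one: $\mathbf{\Delta}$ is a signature matrix from the construction in Lemma \ref{lemma:dad}, hence $\mathbf{\Delta}^{-1}=\mathbf{\Delta}$, the identity is an honest similarity transformation, and equality of characteristic polynomials plus the bijection $\mathbf{x}\mapsto\mathbf{\Delta}\mathbf{x}$ between eigenspaces gives equal algebraic and geometric multiplicities. Your proof is correct and matches how the paper uses $\mathbf{\Delta}$ elsewhere (e.g., in the proof of Theorem \ref{theorem:social_influence}).
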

\begin{lemma}\label{lemma:1inA}
  Let $\mathbf{W}\odot\mathbf{F}$ be 
  such that $A_{ii}=0$ and
  $A_{ij}=A_{ji}$. Then, $\mathbf{W}\odot\mathbf{F}$ is opposition
  bipartite if and only if $\lambda=1$ is an eigenvalue of
  $\mathbf{A}$. 
\end{lemma}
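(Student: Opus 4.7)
The plan is to prove the two directions separately, leveraging the preceding two lemmas in the preparatory section.

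For the easy direction, I would assume that $\mathbf{W}\odot\mathbf{F}$ is opposition bipartite. Lemma \ref{lemma:dad} then gives a diagonal matrix $\mathbf{\Delta}$ (with $\pm 1$ entries) such that $\mathbf{\Delta}\mathbf{A}\mathbf{\Delta}=\length{\mathbf{A}}$, and Lemma \ref{lemma:similarityTransform} ensures $\mathbf{A}$ and $\length{\mathbf{A}}$ share the same spectrum. Since $W_{ii}=0$ and $\mathbf{W}$ is row-stochastic, every row of $\length{\mathbf{A}}$ sums to $1$, so $\length{\mathbf{A}}\mathbf{1}=\mathbf{1}$ and $\lambda=1\in\sigma(\length{\mathbf{A}})=\sigma(\mathbf{A})$.

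For the converse, I would start from a real eigenvector $\mathbf{x}\neq \mathbf{0}$ with $\mathbf{A}\mathbf{x}=\mathbf{x}$ (real because $\mathbf{A}$ is symmetric). The triangle inequality yields, for each $i$,
\begin{align*}
\length{x_i} = \Bigl\lvert\sum_j A_{ij}x_j\Bigr\rvert \le \sum_j \length{A_{ij}}\,\length{x_j} = \sum_j W_{ij}\length{x_j}.
\end{align*}
Summing over $i$ and using that $\length{\mathbf{A}}$ is symmetric and row-stochastic (hence doubly stochastic), the column-sum telescopes back to $\sum_j \length{x_j}$, forcing equality in every row. Thus $\length{\mathbf{x}}$ is a nonnegative eigenvector of $\length{\mathbf{A}}$ with eigenvalue $1$. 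Invoking Perron--Frobenius on the irreducible nonnegative matrix $\length{\mathbf{A}}$ (irreducibility comes from strong connectedness of the underlying graph, which is the contextually relevant case), the Perron eigenvalue $\rho(\length{\mathbf{A}})=1$ is simple and its eigenvector can be taken strictly positive; hence $\length{x_i}>0$ for all $i$.

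Now define $\mathbf{\Delta}=\diag(\text{sign}(x_1),\ldots,\text{sign}(x_n))\in\{-1,+1\}^{n\times n}$. Equality in the triangle inequality, together with $x_i=\sum_j A_{ij}x_j$, forces every nonzero term $A_{ij}x_j$ to carry the sign of $x_i$. Writing this out as $\text{sign}(A_{ij})\text{sign}(x_j)=\text{sign}(x_i)$ whenever $W_{ij}>0$, we get $\text{sign}(A_{ij})=\mathbf{\Delta}_{ii}\mathbf{\Delta}_{jj}$, so $\mathbf{\Delta}_{ii}A_{ij}\mathbf{\Delta}_{jj}=\length{A_{ij}}$ (trivially true when $A_{ij}=0$). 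Therefore $\mathbf{\Delta}\mathbf{A}\mathbf{\Delta}=\length{\mathbf{A}}$, and Lemma \ref{lemma:dad} delivers opposition-bipartiteness via the partition $\mathcal{N}_1=\set{i\sd x_i>0}$, $\mathcal{N}_2=\set{i\sd x_i<0}$.

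The main obstacle is the converse step: one must be careful to extract a globally consistent sign assignment, which is where the Perron--Frobenius ensured strict positivity of $\length{\mathbf{x}}$ is crucial. Without strict positivity, zeros in $\length{\mathbf{x}}$ would leave some $\mathbf{\Delta}_{ii}$ undefined and make the reconstruction of $\mathbf{\Delta}$ ambiguous; strong connectedness of the underlying graph (which we implicitly work under, consistently with the ambient section) rules this out and makes the argument go through cleanly.
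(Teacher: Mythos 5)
Your proof is correct, but it takes a genuinely different route from the paper. The paper disposes of this lemma in one line by citing Lemma~1 of \textcite{Altafini2013}, which states that the signed Laplacian $\mathbf{L}=\mathbf{I}-\mathbf{A}$ of a connected signed graph has $0$ as an eigenvalue exactly when the graph is structurally balanced (opposition bipartite), and then observing $1\in\sigma(\mathbf{A})\iff 0\in\sigma(\mathbf{L})$. You instead reconstruct the substance of that external result from scratch: the forward direction via the gauge transformation of Lemma~\ref{lemma:dad} and the isospectrality of Lemma~\ref{lemma:similarityTransform}, and the converse via the equality case of the triangle inequality applied to a real unit eigenvector, double stochasticity of $\length{\mathbf{A}}$, and Perron--Frobenius to obtain a strictly positive $\length{\mathbf{x}}$ from which a consistent sign matrix $\mathbf{\Delta}$ can be read off. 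What your approach buys is self-containedness and, importantly, an explicit identification of where connectedness enters: the lemma as stated omits a strong-connectedness hypothesis, yet without it the converse fails (a disconnected network with one balanced and one unbalanced component has $1\in\sigma(\mathbf{A})$ without being opposition bipartite), and the paper's citation likewise relies on Altafini's connectedness assumption. Your flagging of this is a genuine improvement in precision; the only cost is length.
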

\begin{proof}
  \textcite{Altafini2013}, Lemma 1, shows that
  $0\in\sigma(\mathbf{L})$ if and only if $\mathbf{A}$ is opposition
  bipartite where $\mathbf{L}=\mathbf{I}-\mathbf{A}$. Clearly,
  $1\in\sigma(\mathbf{A})\iff 0\in\sigma(\mathbf{L})$. 
\end{proof}
\subsection{Proofs of main results}\label{sec:proofs}
      \begin{proposition}\label{prop:w-}
      Let
      $\mathbf{W}\odot \mathbf{F}$ be an arbitrary social
      network such that
      $F_{ij}\in \set{\mathfrak{F},\mathfrak{D}_i}$, for all 
      $i,j\in[n]$. Then, for all $c\in S$, 
      \begin{align*}
      c\in\bigcap_{(i,j)\in [n]\times[n]}\text{Fix}(F_{ij}) \implies
      (c,\ldots,c)\in\text{Fix}(\mathbf{W}\odot\mathbf{F}). 
      \end{align*}
      Moreover, let $A=\set{i\in[n]\sd W_{i,\text{Out}(i)}>C_0}$
      denote the set of agents whose weight mass assigned to out-group
      members exceeds a particular threshold $C_0$; in the discrete
      case, $C_0=\frac{1}{2}$, and in the continuous case,
      $C_0=0$. Then, for any $i\in A$, it holds that
      \begin{align*}
        c\notin \text{Fix}(\mathfrak{D}_i)\implies 
        (c,\ldots,c)\notin \text{Fix}(\mathbf{W}\odot\mathbf{F}). 
      \end{align*}

      Combining both implications yields that 
      \begin{align*}
      P_1[\text{Fix}(\mathbf{W}\odot \mathbf{F})\cap \mathcal{C}]=\bigcap_{i\in
      A}\text{Fix}(\mathfrak{D}_i).
      \end{align*}
%
      \end{proposition}
\begin{remark}
        If $\mathfrak{D}$ is allowed to vary across both
        $i$ \emph{and} $j$, then
        $c\notin\text{Fix}(\mathfrak{D}_{ij})$ does not necessarily
        imply that
        $(c,\ldots,c)\notin\text{Fix}(\mathbf{W}\odot\mathbf{F})$. To
        see this, assume, for example, that in a three-player society
        $\set{1,2,3}$ agent $1$ has $\text{Out}(1)=\set{2,3}$ with
        $W_{12}=W_{13}=\frac{1}{4}$. 
        For a $c\in S=\real$, let
        $\mathfrak{D}_{12}(c)=c+\epsilon$ and let
        $\mathfrak{D}_{13}(c)=c-\epsilon$, for some $\epsilon>0$. Assuming that
        $\text{Out}(2)=\text{Out}(3)=\emptyset$, we have
        $(\mathbf{W}\odot\mathbf{F})\mathbf{c}=\mathbf{c}$, since, in
        particular, for agent $1$, 
        \begin{align*}
          W_{11}c+W_{12}\mathfrak{D}_{12}(c)+W_{13}\mathfrak{D}_{13}(c)=\frac{1}{2}c+\frac{1}{4}(c+\epsilon)+\frac{1}{4}(c-\epsilon)=c. 
        \end{align*}
\end{remark}      
\begin{proof}[Proof of Proposition \ref{prop:w-}]
      We only provide the proof for the continuous model. The discrete
      model proof is similar. 

        If $c=F_{ij}(c)$ for some $c\in S$ and all $(i,j)\in [n]\times
     [n]$, then clearly --- letting $\mathbf{c}=(c,\ldots,c)$ --- 
  $(\mathbf{W}\odot\mathbf{F})\mathbf{c}=\mathbf{c}$ by the definition
  of $\mathbf{W}\odot\mathbf{F}$ since for each agent $i\in[n]$, 
  \begin{align*}
    \bigl[(\mathbf{W}\odot\mathbf{F})\mathbf{c}\bigr]_{i}=
    \sum_{j=1}^n W_{ij}F_{ij}(c) = 
    c\sum_{j\in[n]}W_{ij} = c = [\mathbf{c}]_i.
  \end{align*}

  Conversely, let $c\neq \mathfrak{D}_i(c)$ for some $c\in S$ and some
  $i\in A$. 
  If
  $\mathbf{c}=(c,\ldots,c)$ were a fixed-point of
  $\mathbf{W} \odot\mathbf{F}$, then 
  \begin{align*}
    c = \sum_{j\in
      \text{Out}(i)}W_{ij}\mathfrak{D}_i(c)+\sum_{j\in\text{In}(i)}W_{ij}c =
    \mathfrak{D}_i(c)W_{i,\text{Out}(i)}+c(1-W_{i,\text{Out}(i)}),
  \end{align*}
  which implies that
  \begin{align*}
    W_{i,\text{Out}(i)}c = W_{i,\text{Out}(i)}\mathfrak{D}_i(c).
  \end{align*}
  This is a contradiction since $W_{i,\text{Out}(i)}>0$ by assumption.       
      \end{proof}
\begin{lemma}\label{lemma:w-}
        Let $\mathbf{W}\odot\mathbf{F}$ be an arbitrary social
        network. Assume that either $\mathbf{W}\odot\mathbf{F}$ refers
        to the discrete model or that each function $F_{ij}$ in
        $\mathbf{F}$ is continuous. 
        Then:
        \begin{align*}
                \text{Lim}(\mathbf{W}\odot\mathbf{F}) = 
                \text{Fix}(\mathbf{W}\odot\mathbf{F}).
        \end{align*}
\end{lemma}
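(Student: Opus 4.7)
The plan is to prove equality by establishing the two inclusions separately, since each rests on a short, essentially standard, observation.

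First I would show $\text{Lim}(\mathbf{W}\odot\mathbf{F}) \supseteq \text{Fix}(\mathbf{W}\odot\mathbf{F})$, which is the trivial direction. Take any $\mathbf{b}\in\text{Fix}(\mathbf{W}\odot\mathbf{F})$ and set $\mathbf{b}(0):=\mathbf{b}$. Then a straightforward induction on $t$ gives $(\mathbf{W}\odot\mathbf{F})^t\mathbf{b}(0)=\mathbf{b}$ for every $t\ge 0$, so the limit exists and equals $\mathbf{b}$; hence $\mathbf{b}\in\text{Lim}(\mathbf{W}\odot\mathbf{F})$.

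For the other inclusion $\text{Lim}(\mathbf{W}\odot\mathbf{F}) \subseteq \text{Fix}(\mathbf{W}\odot\mathbf{F})$, I would split into the two hypotheses of the lemma. In the continuous case, the operator $\mathbf{W}\odot\mathbf{F}$ is componentwise given by $\bigl[(\mathbf{W}\odot\mathbf{F})\mathbf{b}\bigr]_i=\sum_j W_{ij}F_{ij}(b_j)$; since each $F_{ij}$ is continuous by assumption and finite sums/scalar multiplications preserve continuity, the whole operator is continuous on $S^n$. Then the chain of equalities already displayed in Remark \ref{rem:fixed} applies verbatim to any $\mathbf{b}(\infty)\in\text{Lim}(\mathbf{W}\odot\mathbf{F})$: pushing $\mathbf{W}\odot\mathbf{F}$ through the limit and reindexing shows $(\mathbf{W}\odot\mathbf{F})\mathbf{b}(\infty)=\mathbf{b}(\infty)$.

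In the discrete case, continuity of $F_{ij}$ need not hold, so I would argue instead via the finiteness of $S^n$. If $(\mathbf{W}\odot\mathbf{F})^t\mathbf{b}(0)$ converges in the finite (hence discrete) set $S^n$, then the sequence must be eventually constant, i.e., there exists $T$ with $(\mathbf{W}\odot\mathbf{F})^t\mathbf{b}(0)=\mathbf{b}(\infty)$ for all $t\ge T$. Applying the operator once to the identity at time $T$ yields $(\mathbf{W}\odot\mathbf{F})\mathbf{b}(\infty)=(\mathbf{W}\odot\mathbf{F})^{T+1}\mathbf{b}(0)=\mathbf{b}(\infty)$, so $\mathbf{b}(\infty)\in\text{Fix}(\mathbf{W}\odot\mathbf{F})$.

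There is really no main obstacle here: the lemma is essentially a clean restatement of Remark \ref{rem:fixed} augmented with the trivial inclusion and the discrete-case argument. The only subtlety is to justify, in the discrete case, the step ``convergence in a finite set implies eventually constant'', but this is immediate from the definition of convergence in a discrete topology, so no further hypothesis on $\mathbf{F}$ is needed there.
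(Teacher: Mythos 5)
Your proposal is correct and follows essentially the same route as the paper: the trivial inclusion $\text{Fix}\subseteq\text{Lim}$, continuity of the operator pushed through the limit in the continuous case, and eventual constancy of a convergent sequence in the finite opinion space for the discrete case — all of which the paper bundles into a citation of Remark \ref{rem:fixed}. You merely spell out the content of that remark explicitly, which is fine.
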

\begin{proof}[Proof of Lemma \ref{lemma:w-}]
        The relation
        $\text{Fix}(\mathbf{W}\odot\mathbf{F})\subseteq \text{Lim}(\mathbf{W}\odot\mathbf{F})$
        is obvious. 
        Conversely, if each $F_{ij}$ is continuous, then
        $\mathbf{W}\odot\mathbf{F}$ is a continuous operator and thus,
        each limit vector
        $\mathbf{b}(\infty)\in\text{Lim}(\mathbf{W}\odot\mathbf{F})$ is a
        fixed-point of $\mathbf{W}\odot\mathbf{F}$:
        \begin{align*}
        (\mathbf{W}\odot\mathbf{F})\mathbf{b}(\infty)
        =
        (\mathbf{W}\odot\mathbf{F})\lim_{t\goesto\infty}(\mathbf{W}\odot\mathbf{F})^{t}\mathbf{b}(0) 
        = \lim_{t\goesto\infty}(\mathbf{W}\odot\mathbf{F})^{t+1}\mathbf{b}(0)
        = \mathbf{b}(\infty).
\end{align*}
         If $S$ is finite and $\mathbf{W}\odot\mathbf{F}$ is
convergent (for $\mathbf{b}(0)$), then $\mathbf{b}(\infty)$ is a
fixed-point of $\mathbf{W}\odot\mathbf{F}$ no matter the specification
of $\mathbf{F}$. 
\end{proof}
\begin{proof}[Proof of Theorem \ref{cor:w-}]
  This is an application of Proposition \ref{prop:w-} and
  Lemma \ref{lemma:w-}.  
\end{proof}
\begin{proposition}\label{prop:consensusOpposition}
  Let $\mathfrak{D}$ be of the form $ax+b$ for some constants $a$ and
  $b$, and let $F_{ij}\in\set{\mathfrak{F},\mathfrak{D}}$. 
  Then,
  $\mathbf{W}\odot\mathbf{F}$ is an affine-linear operator of the form
  $\mathbf{Ax}+\mathbf{d}$, that is,
  $(\mathbf{W}\odot\mathbf{F})(\mathbf{x})=\mathbf{A}\mathbf{x}+\mathbf{d}$ for
  all $\mathbf{x}\in S^n$. 
\end{proposition}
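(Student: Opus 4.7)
The proposed proof is a direct computation unpacking the definition of $\mathbf{W}\odot\mathbf{F}$, so the plan is essentially to write out one sum and split it according to whether each entry of $\mathbf{F}$ is $\mathfrak{F}$ or $\mathfrak{D}$. I expect no serious obstacle; the content of the proposition is really that \emph{affine-linearity of $\mathfrak{D}$ is inherited by} $\mathbf{W}\odot\mathbf{F}$, and this is immediate because $\mathbf{W}\odot\mathbf{F}$ is a weighted sum of entries $F_{ij}(x_j)$, each of which is now an affine function of $x_j$, and affine functions are closed under weighted sums.

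More concretely, I would start from the defining equation
\begin{align*}
\bigl[(\mathbf{W}\odot\mathbf{F})(\mathbf{x})\bigr]_i = \sum_{j=1}^n W_{ij}\cdot F_{ij}(x_j) = \sum_{j\in \textrm{In}(i)} W_{ij} x_j + \sum_{j\in \textrm{Out}(i)} W_{ij}\bigl(a x_j + b\bigr),
\end{align*}
where the second equality uses $\mathfrak{F}(x)=x$ together with the assumed form $\mathfrak{D}(x)=ax+b$. Expanding the second sum and regrouping, this becomes $\sum_{j=1}^n A_{ij} x_j + d_i$, where I would define
\begin{align*}
A_{ij} = \begin{cases} W_{ij} & \text{if } j\in\textrm{In}(i),\\ a\,W_{ij} & \text{if } j\in\textrm{Out}(i),\end{cases} \qquad d_i = b\cdot W_{i,\textrm{Out}(i)}.
\end{align*}
Collecting these into a matrix $\mathbf{A}=(A_{ij})$ and a vector $\mathbf{d}=(d_i)$ yields $(\mathbf{W}\odot\mathbf{F})(\mathbf{x}) = \mathbf{A}\mathbf{x} + \mathbf{d}$ for every $\mathbf{x}\in S^n$, as required.

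The only subtlety worth flagging (rather than an obstacle) is that $A_{ij}$ and $d_i$ are uniquely determined by $\mathbf{W}$, $\mathbf{F}$, and the coefficients $a,b$; in particular, the construction does not depend on $\mathbf{x}$, which is what makes $\mathbf{W}\odot\mathbf{F}$ \emph{genuinely} affine-linear rather than merely pointwise representable. I would close by remarking that when $b=0$ (so $\mathfrak{D}$ is linear) one has $\mathbf{d}=\mathbf{0}$ and the operator reduces to the pure matrix action $\mathbf{x}\mapsto\mathbf{A}\mathbf{x}$, and that the special case $\mathfrak{D}(x)=-x$ (soft opposition on $S=[-\beta,\beta]$) corresponds to $a=-1$, $b=0$, which is precisely the form already seen in Example \ref{example:probinv2}.
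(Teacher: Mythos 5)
Your proof is correct and follows exactly the same route as the paper's: expand $\bigl[(\mathbf{W}\odot\mathbf{F})\mathbf{x}\bigr]_i$ over $\textrm{In}(i)$ and $\textrm{Out}(i)$, substitute $\mathfrak{D}(x_j)=ax_j+b$, and read off $A_{ij}=W_{ij}$ or $aW_{ij}$ and $d_i=bW_{i,\textrm{Out}(i)}$, which are precisely the paper's definitions in its Equation \eqref{eq:A-opposition}. Nothing is missing.
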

\begin{proof}[Proof of Proposition \ref{prop:consensusOpposition}]
  For each agent
  $i\in[n]$, we have
  \begin{align*}
    \bigl[(\mathbf{W}\odot\mathbf{F})\mathbf{x}\bigr]_i =& \sum_{j\in
      \text{In}(i)} W_{ij}x_j +\sum_{j\in\text{Out}(i)} W_{ij}\mathfrak{D}(x_j) = \sum_{j\in
      \text{In}(i)} W_{ij}x_j +\sum_{j\in\text{Out}(i)}
    W_{ij}(ax_j+b)\\
    &= \sum_{j\in
      \text{In}(i)} W_{ij}x_j +\sum_{j\in\text{Out}(i)}
    aW_{ij}x_j + b\sum_{j\in\text{Out}(i)}W_{ij} \\&= \sum_{j\in
      \text{In}(i)} W_{ij}x_j +\sum_{j\in\text{Out}(i)}
    (aW_{ij})x_j + bW_{i,\text{Out}(i)}. 
  \end{align*}
  Thus, we can set $\mathbf{A}\in\real^{n\times n}$,
  $\mathbf{d}\in\real^n$ with
  \begin{align}\label{eq:A-opposition}
   A_{ij} = \begin{cases}
     W_{ij} & \text{if } F_{ij}=\mathfrak{F},\\
     aW_{ij} & \text{if } F_{ij}=\mathfrak{D},
     \end{cases}\quad
   \quad d_i = bW_{i,\text{Out}(i)}.
  \end{align}
\end{proof}
\begin{lemma}\label{lemma:inverse}
  Let $\mathbf{W}\odot\mathbf{F}$ be an arbitrary social network with
  $F_{ij}\in\set{\mathfrak{F},\mathfrak{D}}$ for an arbitrary
  deviation function $\mathfrak{D}$. Then,
  $\mathbf{W}\odot\mathbf{F}$ is opposition bipartite if and only if
  $\mathbf{W}\odot\bar{\mathbf{F}}$ is reverse opposition bipartite,
  where $\bar{\mathbf{F}}$ is the matrix with entries
  $\bar{F}_{ij}=\lnot F_{ij}$, whereby we define $\lnot
  \mathfrak{D}=\mathfrak{F}$ and $\lnot \mathfrak{F}=\mathfrak{D}$. 
 \end{lemma}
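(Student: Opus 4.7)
The proof will be a direct unwrapping of the two definitions, exploiting that the operation $\mathbf{F} \mapsto \bar{\mathbf{F}}$ is an involution (since $\lnot\lnot F_{ij} = F_{ij}$) which swaps the roles of ``follows'' and ``deviates'' pointwise. The key observation is that the definition of opposition bipartiteness and the definition of reverse opposition bipartiteness use \emph{the same partition structure} $(\struct{N}_1,\struct{N}_2)$ of $[n]$ and differ only in which of $\mathfrak{F}$ and $\mathfrak{D}$ is required within groups versus across groups. The plan is therefore to show that the same partition witnessing one property witnesses the other after applying $\mathbf{F} \mapsto \bar{\mathbf{F}}$.

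Concretely, I would first assume $\mathbf{W}\odot\mathbf{F}$ is opposition bipartite with witnessing partition $(\struct{N}_1,\struct{N}_2)$. For $i_0,i_1\in\struct{N}_i$ with $W_{i_0i_1}>0$, the opposition bipartite definition gives $F_{i_0i_1}=\mathfrak{F}$, so $\bar{F}_{i_0i_1}=\lnot\mathfrak{F}=\mathfrak{D}\neq\mathfrak{F}$. For $i_0\in\struct{N}_i, j_0\in\struct{N}_{-i}$ with $W_{i_0j_0}>0$, we have $F_{i_0j_0}\neq\mathfrak{F}$, i.e.\ $F_{i_0j_0}=\mathfrak{D}$, hence $\bar{F}_{i_0j_0}=\mathfrak{F}$. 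These two conclusions are exactly the two defining requirements of reverse opposition bipartiteness for $\mathbf{W}\odot\bar{\mathbf{F}}$ with the same partition $(\struct{N}_1,\struct{N}_2)$.

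For the converse, I would apply the forward direction verbatim to $\mathbf{W}\odot\bar{\mathbf{F}}$: assuming $\mathbf{W}\odot\bar{\mathbf{F}}$ is reverse opposition bipartite, the same chain of implications shows that $\mathbf{W}\odot\bar{\bar{\mathbf{F}}}=\mathbf{W}\odot\mathbf{F}$ is opposition bipartite. Alternatively one can run the argument symmetrically, starting from the reverse opposition bipartite hypothesis on $\mathbf{W}\odot\bar{\mathbf{F}}$ and deducing the opposition bipartite conclusion on $\mathbf{W}\odot\mathbf{F}$ directly.

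There is no genuine obstacle here: the lemma is a definitional restatement, and the only point requiring care is to check that the quantifier patterns ``$\forall i_0,i_1\in\struct{N}_i$'' and ``$\forall i_0\in\struct{N}_i,\:j_0\in\struct{N}_{-i}$'' in the two definitions match up correctly when $F$ is replaced by $\bar{F}$, which they do because the partition plays an identical role in both definitions. The proof should fit in a few lines.
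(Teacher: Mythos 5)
Your proof is correct and follows essentially the same idea as the paper, which offers only a one-line ``graphical proof'' by pointing to Figure \ref{fig:bipOpposition}; the underlying observation in both cases is that negating every link type turns the two defining conditions of opposition bipartiteness into those of reverse opposition bipartiteness for the \emph{same} partition. Your write-up simply makes explicit the definitional verification (including the use of $F_{ij}\in\set{\mathfrak{F},\mathfrak{D}}$ to pass from $F_{i_0j_0}\neq\mathfrak{F}$ to $F_{i_0j_0}=\mathfrak{D}$, and the involution $\bar{\bar{\mathbf{F}}}=\mathbf{F}$ for the converse) that the paper leaves implicit.
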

\begin{proof}
  See Figure \ref{fig:bipOpposition}, in Section \ref{sec:discrete}, for a
  graphical proof.  
\end{proof}
\begin{remark}
If $\mathfrak{D}$ is soft opposition on $S_0$, 
  let $(\mathbf{A},\mathbf{0})$ be the representation of
  $\mathbf{W}\odot\mathbf{F}$. Then, the lemma specializes to the
  statement that, in this situation, 
  $(\mathbf{A},\mathbf{0})$ is opposition bipartite if and only if
  $(-\mathbf{A},\mathbf{0})$ is reverse opposition bipartite. 
\end{remark}

\begin{proof}[Proof of Theorem \ref{theorem:main}]
  (i) If $\mathbf{W}\odot\mathbf{F}$ induces a (non-zero) polarization, then,
  necessarily, $1\in\sigma(\mathbf{A})$. But, $1\in
  \sigma(\mathbf{A})\iff \mathbf{W}\odot\mathbf{F}$ is opposition
  bipartite by Lemma \ref{lemma:1inA}. Conversely, let
  $\mathbf{W}\odot\mathbf{F}$ be opposition 
  bipartite. Then, $\length{\mathbf{A}}$ 
  and $\mathbf{A}$ are
  \emph{isospectral}, that is,
  they have the same eigenvalues and with the same associated 
  multiplicities by Lemmas \ref{lemma:dad} and \ref{lemma:similarityTransform}. Now, ($\star$) a strongly connected and aperiodic
  row-stochastic matrix $\length{\mathbf{A}}$ has exactly one eigenvalue on the
  unit circle, $\lambda=1$, with algebraic and geometric multiplicity
  of $1$. Therefore, $\mathbf{A}$ has exactly one eigenvalue on the
  unit circle, $\lambda=1$, with algebraic and geometric multiplicity
  of $1$ and, consequently, converges by Theorem
  \ref{theorem:spectral}. Moreover, since each polarization
  vector $\mathbf{x}$ with $x_i=1$ if $i\in\struct{N}_1$ and $x_i=-1$
  if $i\in\struct{N}_2$ satisfies
  $\mathbf{Ax}=(\mathbf{W}\odot\mathbf{F})\mathbf{x}=\mathbf{x}$ when
  $\mathbf{W}\odot\mathbf{F}$ is opposition bipartite with partition
  $(\struct{N}_1,\struct{N}_2)$, $\mathbf{W}\odot\mathbf{F}$ induces a
  polarization that is not always zero (note that the geometric multiplicity 
  of $\lambda=1$ of $\mathbf{A}$ 
  is $1$). 

  Part
  (ii) ``$\Leftarrow$'' follows from the fact that $1\in\sigma(\mathbf{A})\iff $
  $\mathbf{W}\odot\mathbf{F}$ is opposition bipartite and the fact
  that $\mathbf{W}\odot\mathbf{F}$ with representation $\mathbf{A}$ is
  opposition bipartite if and only if $-\mathbf{A}$ is reverse opposition
  bipartite by Lemma \ref{lemma:inverse}. Thus,
  $-1\in\sigma(\mathbf{A})\iff $ 
  $\mathbf{W}\odot\mathbf{F}$ is reverse opposition bipartite,
  whence $\mathbf{A}$ diverges by Theorem
  \ref{theorem:spectral}. Conversely, when $\mathbf{W}\odot\mathbf{F}$
  diverges, then $\rho(\mathbf{A})=1$ (we have $\rho(\mathbf{A})\le 1$ for all
  such matrices $\mathbf{A}$ as we consider since
  $|\mathbf{A}|=\mathbf{W}$ is row-stochastic and therefore,
  $\rho(\mathbf{A})\le \rho(|\mathbf{A}|)=1$). If $1$ were in
  $\sigma(\mathbf{A})$, then $\mathbf{W}\odot\mathbf{F}$ were
  opposition bipartite and $\mathbf{W}\odot\mathbf{F}$ would converge
  by (i). Hence, $1\notin\sigma(\mathbf{A})$ and consequently,
  $-1\in\sigma(\mathbf{A})$ --- since a symmetric matrix $\mathbf{A}$
  has no complex eigenvalues.  
  Consequently, $\mathbf{W}\odot\mathbf{F}$ is
  reverse opposition bipartite.  

  Finally, for part (iii), if $\mathbf{W}\odot\mathbf{F}$ is
  neither opposition bipartite nor reverse opposition bipartite, then, by
  our above reasonings, $\pm 1\notin\sigma(\mathbf{A})$, and since
  $\mathbf{A}$ is symmetric, $\mathbf{A}$ has no complex eigenvalues,
  whence $\rho(\mathbf{A})<1$. Thus, 
  $\mathbf{W}\odot\mathbf{F}$ 
  induces 
  the unique neutral consensus
  $(0,\ldots,0)$. 
  Conversely, if $\mathbf{W}\odot\mathbf{F}$ induces the
  neutral consensus $(0,\ldots,0)$ for each initial belief vector
  $\mathbf{b}(0)$, then necessarily $\rho(\mathbf{A})<1$. 
  Hence, 
  $\mathbf{W}\odot\mathbf{F}$ is neither opposition bipartite nor
  reverse opposition  
  bipartite. 

  Now, fact ($\star$) is a classical theorem for row-stochastic matrices,
  which is, e.g., based on the famous Perron-Frobenius theorem; in our
  context, it is given by combining Theorems \ref{th:trust1} and
  \ref{theorem:spectral}, for example. 

\end{proof}
\begin{proof}[Proof of Theorem \ref{theorem:social_influence}]
  (i) $\implies$ (ii): 
  For an appropriate initial opinion vector, 
  let each agent hold limiting opinions $a\neq 0$ or $-a$ as
  indicated. Place agents in a group $\struct{N}_1$
  resp. $\struct{N}_2$ depending on whether they hold limit opinions
  $a$ or $-a$, respectively. We show that
  $(\struct{N}_1,\struct{N}_2)$ forms an opposition bipartite
  partition of $[n]$. Take $i,i'\in\struct{N}_1$ and assume that
  $A_{ii'}<0$ (i.e., $i$ and $i'$ are enemies). Then, since the limit
  opinion vector is a fixed-point of $\mathbf{A}$, we have: 
  \begin{align*}
    a(\pm A_{i1}+\cdots+A_{ii'}+\cdots+\pm A_{in}) = a.
  \end{align*}
  But this cannot be, since 
  $A_{ii'}+C<\length{A_{ii'}}+C'=1$ where
  $C=\sum_{j\neq i'}\pm A_{ij}$ and $C'=\sum_{j\neq
    i'}\length{A_{ij}}$. 
  Similarly, we can show that no two agents 
  $i\in\struct{N}_1$, $j\in\struct{N}_2$ are friends of each
  other. Hence, $\mathbf{W}\odot\mathbf{F}$ is opposition
  bipartite. 

  (ii) $\implies$ (i): 
  Conversely, let $\mathbf{W}\odot\mathbf{F}$ be opposition bipartite
  and aperiodic. 
  By Lemma \ref{lemma:dad}, $\mathbf{W}\odot\mathbf{F}$ is opposition
  bipartite if and only if there exists a diagonal matrix
  $\mathbf{\Delta}$ (with entries $\pm 1$) such that
  $\length{\mathbf{A}}=\mathbf{\Delta}\mathbf{A}\mathbf{\Delta}$. We
  know that
  \begin{align*}
    \lim_{t\goesto\infty}\length{\mathbf{A}}^t\mathbf{p} = \mathbf{s}^\intercal\mathbf{p}\mathbbm{1}
  \end{align*}
  for all $\mathbf{p}\in S^n$ 
  (see, e.g., \cite[Proposition 1]{Golub2010}) 
  and that
  $\inv{\mathbf{\Delta}}=\mathbf{\Delta}$. 
  Therefore,
  \begin{align*}
    \lim_{t\goesto\infty} \mathbf{A}^t\mathbf{p} =
    \mathbf{\Delta}\lim_{t\goesto\infty}
    \length{\mathbf{A}}^t(\mathbf{\Delta}\mathbf{p}) = \mathbf{\Delta}\mathbf{s}^\intercal(\mathbf{\Delta}\mathbf{p})\mathbbm{1}. 
  \end{align*}
  This proves the theorem. 
\end{proof}


\end{appendices}

\printbibliography

\end{document}